\newtheorem{theorem}{Theorem}
\newtheorem{lemma}{Lemma}
\newtheorem{remark}{Remark}
\newtheorem{definition}{Definition}
\theoremstyle{plain}
\theoremstyle{plain}
\theoremstyle{plain}
\providecommand{\lemmaname}{Lemma}
\providecommand{\propositionname}{Proposition}
\providecommand{\theoremname}{Theorem}
\providecommand{\lemmaname}{Lemma}
\providecommand{\propositionname}{Proposition}
\def\BibTeX{{\rm B\kern-.05em{\sc i\kern-.025em b}\kern-.08em
    T\kern-.1667em\lower.7ex\hbox{E}\kern-.125emX}}
\def\A{\boldsymbol{\mathrm{A}}}
\def\p{\boldsymbol{\mathrm{P}}}
\def\Bset{\mathcal{B}}
\def\Nset{\mathcal{N}}
\def\Sset{\mathcal{S}}
\def\assocNBS{{a}_{b,s,n}}
\def\np{n^{\prime}}
\def\bp{b^{\prime}}
\def\pNBS{p_{b,s,n}}
\def\dNB{d_{n,b}}
\def\TxGain{G_{b}^{\text{Tx}}}
\def\RxGain{G_{n}^{\text{Rx}}}
\def\AbsCoefNBS{\exp{(-k(f_s)d_{n,b})}}
\def\Nsum{\sum\limits_{n \in \mathcal{N}}}
\def\Ssum{\sum\limits_{s \in \mathcal{S}}}
\def\Bsum{\sum\limits_{b \in \mathcal{B}}}
\def\Npsum{\sum\limits_{n^{\prime} \in \mathcal{N}^{\prime}}}
\def\Bpsum{\sum\limits_{b^{\prime} \in \mathcal{B}^{\prime}}}
\def\h{{\lvert{h_{b,s,n}}\rvert}^2}
\def\hbar{{\lvert{\bar{h}_{b,s,n}}\rvert}^2}
\def\hbpbar{{\lvert{\bar{h}_{b^{\prime},s,n}}\rvert}^2}
\def\htilde{{\lvert{\tilde{h}_{b,s,n}}\rvert}^2}
\def\gam{\gamma_{b,s,n}}
\def\pr{\mathcal{P}}
\def\pr{\mathcal{P}}
\newcommand{\abss}[1]{{\left\lvert{#1}\right\rvert}^2}
\newcommand{\cmark}{\ding{51}}%
\newcommand{\xmark}{\ding{55}}%
\begin{document}
\raggedbottom

\title{Molecular Absorption-Aware User Assignment, Spectrum, and Power Allocation  in  Dense THz Networks with Multi-Connectivity}

\author{Mohammad Amin Saeidi, {\em Graduate Student Member IEEE},  Hina~Tabassum, {\em Senior Member IEEE}, and Mehrazin Alizadeh, {\em Student Member IEEE}
\thanks{This research is supported by a Discovery Grant funded by the Natural Sciences and Engineering Research Council of Canada. The authors are with the Department of Electrical Engineering and Computer Science at York University, Toronto, ON M3J 1P3, Canada. (e-mail: \{amin96a, hinat, Ma75\}@yorku.ca ). }
\vspace{-10mm}
}

\maketitle
\raggedbottom
\begin{abstract}
This paper develops a unified framework to maximize the network sum-rate in a multi-user, multi-BS downlink terahertz (THz) network by optimizing user associations, number and bandwidth of sub-bands in a THz transmission window (TW), bandwidth of leading and trailing edge-bands in a TW,  sub-band assignment, and power allocations. The proposed framework incorporates multi-connectivity and captures the impact of molecular absorption coefficient variations in a TW, beam-squint, molecular absorption noise, and link blockages. To make the problem tractable, we first propose a convex approximation of the molecular absorption coefficient using curve fitting in a TW, determine the feasible bandwidths of the leading and trailing edge-bands, and then derive closed-form optimal solution for the number of sub-bands considering beam-squint constraints. We then  decompose joint user associations,  sub-band assignment, and power allocation problem into two sub-problems, i.e., \textbf{(i)} joint user association and sub-band assignment, and \textbf{(ii)} power allocation. To solve the former problem, we analytically prove the unimodularity of the constraint matrix which enables us to relax the integer constraint without loss of optimality. To solve power allocation sub-problem, a  fractional programming (FP)-based centralized solution as well as an alternating direction method of multipliers (ADMM)-based light-weight distributed solution is proposed. The overall problem is then solved using alternating optimization until convergence. Complexity analysis of the algorithms and numerical convergence are presented. Numerical findings validate the effectiveness of the proposed algorithms and extract useful insights about the interplay of the density of base stations (BSs), Average order of multi-connectivity (AOM), molecular absorption, {hardware impairment}, {imperfect CSI}, and link blockages.
\end{abstract}

\begin{IEEEkeywords}
Terahertz communication, joint user association and sub-band assignment, unimodularity, distributed power allocation, blockage, molecular absorption, hardware impairment
\end{IEEEkeywords}

\section{Introduction}
The evolution of wireless cellular systems, known as the sixth generation (6G), will be driven by the need to support revolutionary services such as connected autonomous vehicles (CAVs), extended reality (XR), digital twins, and immersive remote presence. To meet the requirements of these bandwidth-hungry services, 6G wireless systems must deliver unprecedented data rates \cite{6G-Trends,saeidi2023tractable,THz-URLLC,R2-REF-1,MBN-Survey}. Terahertz (THz) spectrum (0.1 THz to 10 THz) has the potential to offer much wider transmission bandwidths with extreme data rates (in the order of multi-Gbps). Nonetheless, THz spectrum comes with its own set of challenges. Different from the traditional radio frequencies ($<$ 6GHz), THz transmissions are susceptible to high propagation losses due to molecular absorption in the environment, molecular absorption noise, blockages, and hardware impairments. The ultra-broad bandwidth contributes to non-negligible thermal noise.  The THz transmissions are also vulnerable to beam squint which causes beams at various frequencies to radiate in different physical directions leading to system performance degradation \cite{saeidi2023multiband,Beam-squint-DL-based,R2-REF-2}.  As a result, dense THz network deployments will be critical to overcome the molecular absorption, blockage, and noise effects. 

To enable large-scale THz  communication, several critical factors must be considered, including low complexity, efficient and optimal power allocation, user association, and sub-band assignment. It is also essential to account for interference, molecular absorption noise resulting from molecular absorption phenomena at high frequencies, and blockage, while also addressing the beam squint phenomenon.
{Furthermore, molecular absorption is more pronounced in the THz band compared to the mmWave band. Due to the numerous absorption peaks in the THz band, we need to consider transmission windows (TWs), which are unnecessary in the mmWave band. The high molecular absorption at different frequencies in the THz band makes the channel, and achievable rates frequency-dependent, adding challenges to resource allocation. Therefore, we must take specific steps, such as managing TWs, determining trailing and edge bands to maintain constant molecular absorption, and deciding the total number of sub-bands to avoid beam squint phenomena in the THz band.}

In light of these challenges, this work presents a novel and low-complexity framework with multi-connectivity that caters to both centralized and distributed power allocation, while tackling the joint user association and sub-band assignment problem and carefully considering blockage and beam squint effects. {This work incorporates multiple constraints into the optimization problem to determine the parameters of a given TW and the optimal number of sub-bands, aiming to enhance the overall system sum-rate.} 

\subsection{Related Works}
A handful of research works investigated the \textit{spectrum and power allocation in single-cell THz} wireless networks without interference. In \cite{Distance-Aware-2016}, a distance-aware and bandwidth-adaptive resource allocation was proposed to maximize the communication distance. The system considered a single BS and interference from adjacent sub-bands is modeled as Gaussian noise. Specifically, sub-band assignment using distance-ranking and power control using convex optimization were proposed. In \cite{UAV-BW-Power}, the minimization of total uplink and downlink delay in a UAV-based THz transmission was considered. The authors  optimized bandwidth, users' power allocation, and the UAV's location by decomposing the joint problem into three convex sub-problems and solved them via alternating optimization.  
The work in \cite{Tera-IoT} studies how fast and far bits can be transmitted in the THz band by maximizing the product of rate and distance. The paper first obtains the sub-window allocation utilizing the Hungarian algorithm and then alternatively optimizes power allocation and transmission distance.
Moreover, the authors in \cite{shafie2022novel} considered sum-rate maximization in a single-cell THz downlink system by optimizing power allocation, sub-band assignment as well as their respective bandwidth allocation. The authors approximated the molecular absorption coefficient as an exponential function of frequency and relax integer constraints by adding a non-convex constraint. Then, they applied successive convex approximation (SCA) to solve the problem.  
Following a similar approach in \cite{shafie2022novel}, the authors considered maximizing the minimum rate in the multi-cell uplink THz network by optimizing power allocation, user association, sub-band assignment as well as their respective bandwidth allocation \cite{shafie2021spectrum}. 
Nonetheless, a sub-band is assigned to only one user at a time and the total number of sub-bands is greater than or equal to the number of users. Thus, there is no interference from other users or BSs.

None of the aforementioned research considered resource allocation in large-scale THz networks considering interference, blockages, molecular absorption noise, and/or multi-connectivity. 
{Specifically, the adaptive bandwidth allocation used in \cite{shafie2021spectrum} and \cite{shafie2022novel} makes the interference formulation challenging in multi-cell systems when considering frequency reuse. In the case of variable bandwidth, interference can be non-uniform due to non-uniform bandwidth divisions across different BSs and varying transmit power on those bandwidth divisions. Moreover,  adaptive bandwidth allocation increases the complexity of resource allocation, as in \cite{shafie2022novel} for even a single-cell set-up. Additionally, the impact of molecular absorption noise was not considered. 
}

Recently, a handful of research works touched on the resource allocation in large-scale THz networks though without considering the propagation specifics of THz channel. The work in \cite{EE-THz-NOMA} maximizes energy efficiency (EE) in a downlink THz system. Matching theory (switch-matching) is used for sub-channel assignment and power allocation is performed using the alternative direction method of multipliers (ADMM). In \cite{EE-Resource-Allocation-Cache-Based}, sub-bands and power allocation are optimized to maximize EE in cache-based THz vehicular networks. This work, without taking TWs and molecular absorption noise into account, first obtains a simplified expression for the rate and then relaxes the binary variables for sub-band assignment. The proposed solution is based on Dinkelbach and Lagrangian dual methods. Authors in \cite{moldovan2017coverage} investigate multiple single antenna APs transmitting the signal of all users on the same frequency. In terms of molecular absorption windows, the maximum value of molecular absorption noise is considered, which leads to a lower bound on the performance. This paper, which aims to maximize the minimum rates, decomposes the power allocation and sub-band assignment problems and obtains a heuristic solution for the latter, and utilizes convex optimization for the former. {Table I summarizes the existing works on the THz resource allocation and clarifies the novelties of this work.}  

\begin{table*}[]
\caption{Comparison of works on THz resource allocation}
\label{tab:RefComp}
\centering
\scalebox{0.84}{
{\color{black}
\begin{tabular}{|c|c|c|c|c|c|c|c|c|c|c|c|}
\hline
\textbf{Ref.}                             & \textbf{\begin{tabular}[c]{@{}c@{}}Frequency \\ reuse\end{tabular}} & \textbf{\begin{tabular}[c]{@{}c@{}}User \\ assignment\end{tabular}} & \textbf{\begin{tabular}[c]{@{}c@{}}Multi-\\ Connectivity\end{tabular}} & \textbf{\begin{tabular}[c]{@{}c@{}}Considering\\ TWs\end{tabular}} & \textbf{\begin{tabular}[c]{@{}c@{}}Molecular \\ absorption \\ noise\end{tabular}} & \textbf{\begin{tabular}[c]{@{}c@{}}Beam \\ squint\end{tabular}} & \textbf{Method}                                                        & \textbf{\begin{tabular}[c]{@{}c@{}}Computational\\ Complexity\end{tabular}} & \textbf{\begin{tabular}[c]{@{}c@{}}Hardware\\  Impairment\end{tabular}} & \textbf{Blockage} & \textbf{\begin{tabular}[c]{@{}c@{}}Imperfect\\  CSI\end{tabular}} \\ \hline
\cite{UAV-BW-Power}                       & \xmark                                                              & \xmark                                                              & \xmark                                                                 & \xmark                                                             & \xmark                                                                            & \xmark                                                          & Alternating Opt.                                                       & Low                                                                         & \xmark                                                                  & \xmark            & \xmark                                                            \\ \hline
\cite{Tera-IoT}                           & \xmark                                                              & \xmark                                                              & \xmark                                                                 & \cmark                                                             & \xmark                                                                            & \xmark                                                          & \begin{tabular}[c]{@{}c@{}}Alternating Opt.\\ + Hungarian\end{tabular} & High                                                                        & \xmark                                                                  & \xmark            & \xmark                                                            \\ \hline
\cite{shafie2022novel}                    & \xmark                                                              & \xmark                                                              & \xmark                                                                 & \cmark                                                             & \xmark                                                                            & \xmark                                                          & SCA                                                                    & High                                                                        & \xmark                                                                  & \xmark            & \xmark                                                            \\ \hline
\cite{shafie2021spectrum}                 & \xmark                                                              & \cmark                                                              & \cmark                                                                 & \cmark                                                             & \xmark                                                                            & \xmark                                                          & SCA                                                                    & High                                                                        & \xmark                                                                  & \cmark            & \xmark                                                            \\ \hline
\cite{EE-THz-NOMA}                        & \cmark                                                              & \xmark                                                              & \xmark                                                                 & \xmark                                                             & \xmark                                                                            & \xmark                                                          & \begin{tabular}[c]{@{}c@{}}Matching Theory \\ + ADMM\end{tabular}      & High                                                                        & \xmark                                                                  & \xmark            & \xmark                                                            \\ \hline
\cite{EE-Resource-Allocation-Cache-Based} & \cmark                                                              & \xmark                                                              & \xmark                                                                 & \xmark                                                             & \xmark                                                                            & \xmark                                                          & \begin{tabular}[c]{@{}c@{}}Mean-Field \\ Game\end{tabular}             & High                                                                        & \xmark                                                                  & \xmark            & \xmark                                                            \\ \hline
\cite{moldovan2017coverage}               & \xmark                                                              & \xmark                                                              & \xmark                                                                 & \xmark                                                             & \cmark                                                                            & \xmark                                                          & Heuristic                                                              & High                                                                        & \xmark                                                                  & \xmark            & \xmark                                                            \\ \hline
This paper                                & \cmark                                                              & \cmark                                                              & \cmark                                                                 & \cmark                                                             & \cmark                                                                            & \cmark                                                          & \begin{tabular}[c]{@{}c@{}}Alternating Opt.\\ + FP/ADMM\end{tabular}   & Low                                                                         & \cmark                                                                  & \cmark            & \cmark                                                            \\ \hline
\end{tabular}
}
}
\end{table*}

\subsection{ Contributions}

To our best knowledge, none of the aforementioned research works addressed the problem of power allocation, joint user association, and sub-band partition and assignment in a large-scale THz network for a given THz TW, while accounting for multi-connectivity, blockage-aware rate, molecular absorption noise, efficient spectrum reuse, and beam squint. 
The main contributions of this work are summarized as follows:

 $\bullet$ We maximize the overall network sum-rate in a downlink THz network with multiple users and multiple BSs by strategically optimizing various parameters. These include user assignments, the count and width of sub-bands within a THz TW, the width of leading and trailing edge-bands within the TW, sub-band, and power allocations.  Our approach considers multi-connectivity, factors in molecular absorption coefficient variations, beam-squint, molecular absorption noise, and link blockages. For beam-squint, we employ the fractional bandwidth concept to set an upper limit for each sub-band bandwidth. {In particular, compared to the existing works, making uniform partitions of sub-bands over a given TW allows us to concurrently enable frequency reuse by including the interference terms and determine an optimal number of sub-bands in favor of achieving a higher system sum-rate and avoiding beam squint. Additionally, incorporating interference renders the problem highly non-convex, necessitating the development of low-complexity algorithms for solutions.}
     
$\bullet$ To determine the number of sub-bands required, we initially employ curve-fitting techniques to create an analytical function representing the TW of interest. This fitted curve allows us to derive the values of the TW's edge bands, which are the regions where transmission should be avoided. By using the estimated edge band values alongside the beam squint-aware upper bound on the bandwidth, we obtain a lower bound on the total number of sub-bands needed for the system.
    
$\bullet$ With the number of sub-bands established, we divide the optimization problem into two sub-problems: joint user association and sub-band assignment, and power allocation. To address the non-convex nature of the association and sub-band assignment sub-problem, we reduce the dimension of the power allocation variables from 3-dimensional (3D) to 2D. Leveraging the structure of the constraints and employing the concept of total unimodularity, we transform the {nondeterministic polynomial time} (NP-hard) integer programming into a low-complexity linear programming, offering a novel and comprehensive proof in this context.
    
$\bullet$ We address the power allocation sub-problem through two proposed fractional programming (FP)-based approaches. One approach involves centralized FP, while the other introduces a novel and low-complexity scheme based on the ADMM methodology. This distributed scheme reduces signaling and communication overhead significantly. Employing alternating optimization, the overall low-complexity algorithm iteratively solves the joint user association and sub-band assignment, as well as the power allocation sub-problem.

$\bullet$ {Numerical results indicate that leveraging multi-connectivity presents performance improvement compared to single-connectivity.  Additionally, when evaluating the impact of antennas' directionality, blockage, molecular absorption coefficient, {hardware impairment}, {imperfect CSI}, and the number of BSs, the proposed methods outperform the traditional benchmarks.} {Our proposed framework offers a general solution that can also be applied to lower frequencies. By setting the molecular absorption coefficient to zero and removing TWs, and by using the respective channel model and adjusting the formulations, our framework can be adapted for lower frequencies such as mmWave and sub-6GHz.}

\textbf{Notations:} This paper uses boldface lowercase letters to represent vectors and boldface capital letters to represent matrices. ${\{0,1\}}^{B\times N}$ and $\mathbb{R}_+^{B\times N}$ denote the space of $B\times N$ binary-valued and non-negative real-valued matrices, respectively. $\mathbf{I}_{N}$ stands for the $N\times N$ identity matrix. $\lvert x \rvert$ is used for the absolute value of $x$, ${\lVert \mathbf{x} \rVert}_2$ denotes the L2 norm of vector $\mathbf{x}$, and $\mathbf{X}^T$ is the transpose of matrix $\mathbf{X}$. $\text{diag}(\mathbf{X},B)$ is a block diagonal matrix, such that $\mathbf{X}$ is repeated $B$ times diagonally, and $\otimes$ indicates Kronecker product. $\mathcal{CN}(0,\sigma^2)$ denotes complex Gaussian distribution with zero mean and variance of $\sigma^2$.

\section{System Model and Assumptions}

\begin{figure}
   \centering
\includegraphics[scale=0.44]{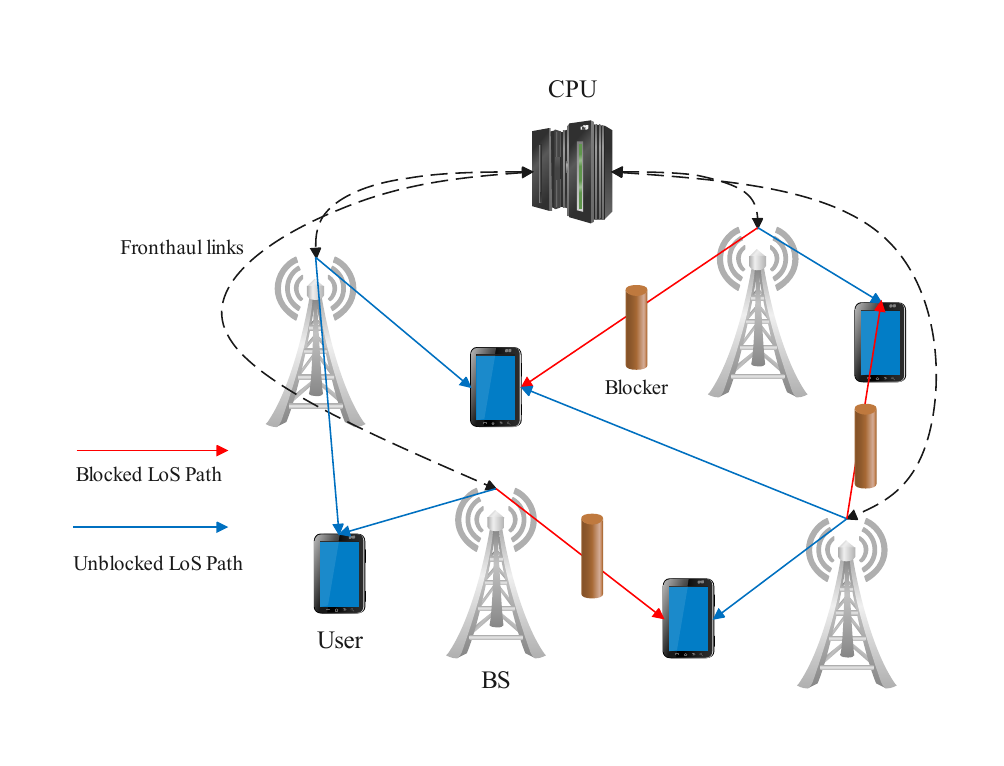}
\caption{{System model illustrating BSs and users in the presence of randomly distributed blockers.}}
\label{fig:SysModel}
\end{figure}

\subsection{Downlink Transmission Model}
{As shown in Fig.~\ref{fig:SysModel},} we consider a downlink THz transmission in a multi-user multi-cell system, where $N$ users are served by $B$ BSs. We define $\Bset=\{1,2,\cdots,b,\cdots,B\}$ as the set of all THz BSs deployed in a cellular region and the set of all users in the cellular region is denoted by $\Nset=\{1,2,\cdots,n,\cdots,N\}$.  

Each of the THz BS and the user is equipped with a single directional antenna.
Due to the existence of path-loss peaks  in the THz spectrum, we consider downlink communication in specified transmission windows where the impact of molecular absorption is minimum. We consider dividing a given TW into $S$ orthogonal sub-bands which is determined based on beam-squint and the bandwidth of edge-bands (as will be detailed later). Additionally, guard bands are utilized between each sub-band to prevent inter-band interference.
The set of sub-bands is referred to as $\Sset=\{1,2,\dots,s,\dots,S\}$.
Each sub-band is reused by all BSs, hence intra-band interference exists.

Without loss of generality, we consider $f_1 < f_2 < \dots < f_S$. Thus, the central frequency of $s$-th sub-band is given as
$f_s = f_I + w_I + (s-0.5)w + (s-1)w_G,$
where $w$ is the bandwidth of each sub-band, $f_I$ is the initial carrier frequency of the TW of interest, $w_I$ is the edge bandwidth at the beginning of the TW, and $w_G$ is the guard band between two consecutive sub-bands. The total bandwidth of a given TW can thus be given as, $w_T=f_E - f_I$, where $f_E$ is the final frequency of the TW. Thus, spectrum allocation is performed in the spectral range of $ [f_I + w_I, \quad  f_E - w_E]$, where $w_E$ is the edge bandwidth at the end of the given TW. 
Both $w_I$ and $w_E$ are designed to avoid transmissions at frequencies with significantly varying molecular absorption. 
We assume multi-connectivity, i.e., each user can be associated with multiple BSs on a different sub-band. Moreover, each BS assigns a sub-band to one user to avoid intra-band within each cell.  

\subsection{Beam Squint Consideration and THz Channel Model}
The ultra-broadband nature of THz transmission can simply increase the ratio between the central frequency, $f_s$, and its bandwidth, $w$. This ratio, called fractional bandwidth, can lead to the beam squint effect, in which each frequency component is divided into various directions leading to beam gain degradation \cite{TeraMINO}. In order to reduce the impact of the beam squint effect, we consider limiting the fractional bandwidth, i.e., $B_s$ of the $s$-th sub-band within $B_{\mathrm{th}}$.
\begin{equation}\label{eqn:BeamSquint1}
    B_s = \frac{w}{f_s}=\frac{w}{f_I + w_I + (s-0.5)w + (s-1)w_G} \leq B_{th}.
\end{equation}

The channel power of the THz link from $b$-th BS to $n$-th user on sub-band $s$ is modeled as:
\begin{equation}\label{eqn:channel}
    \h = \TxGain\RxGain c^2 \dNB^{-2}\AbsCoefNBS \chi_{b,s,n}{(4\pi f_s)}^{-2},
\end{equation}
where  $\TxGain$ and $\RxGain$ are the antenna gains of $b$-th BS and $n$-th user, respectively \cite{THz-Channel-1}, and $\dNB$ is the distance between $b$-th BS and $n$-th user. Also, $c$ is the speed of light, $k(f_s)$ is molecular absorption coefficient at $f_s$, and $N_0$ is the additive Gaussian noise spectral density, and $\chi_{n,b,s}$ denotes Nakagami-$m$\footnote{In the THz spectrum, the non-line of sight (NLoS) paths resulting from scattering and refraction are negligible \cite{EE-Resource-Allocation-Cache-Based}. {Also, the Nakagami-$m$ fading channel model is a general model that can approximate Rician fading for large values of shape parameter ($m\gg1$).}}  fading channel to model line-of-sight (LoS) transmissions \cite{THz-Channel-1}. \footnote{{All channels are assumed to be estimated during the channel training phase and sent to the central processing unit (CPU) via fronthaul links. However, the impact of imperfect channel state information (CSI) on system performance is discussed in Section VI.}}
The antenna gain for all BSs and users is a function of the angle from the antenna boresight direction, $\theta^g \in [-\pi,\pi)$, and main lobe beamwidth, $\Theta^g$, where $g \in \{\text{Tx},\text{Rx}\}$ \cite{AntennaGain}. Thus, the antenna gain can be modeled as follows: 
\begin{equation}\label{eqn:AntennaGain}
G^g(\theta^g)= \left\{ \begin{array}{ll}
G^g_{\max}, &  |\theta^g| \leq \Theta^g \\
G^g_{\min} , & |\theta^g| > \Theta^g,
\end{array}\right.
\end{equation}
where $G^g_{\max}$ and $G^g_{\min}$ are the array gains of the main and side lobes, respectively. When the main lobes of a user and its associated BS coincides, the antenna gain of that link becomes $G^{\text{Tx}}_{\max}G^{\text{Rx}}_{\max}$. Similar to \cite{TanvirMobilityAware}, the alignment probability between the main lobes of interfering BSs and a specific user is denoted by $q=\frac{\theta^{\text{Tx}}\theta^{\text{Rx}}}{4\pi^2}$, and the side-lobes are negligible, i.e., $G_{\min}^g\approx 0$. As the antenna's directionality increases, the main lobe's beamwidth proportionally decreases. Consequently, based on \eqref{eqn:AntennaGain}, when the antenna exhibits higher directionality (characterized by a smaller $\Theta^g$ value), $q$ diminishes because both $\theta^{\text{Tx}}$ and $\theta^{\text{Rx}}$ are limited to values smaller than $\Theta^g$. Thus, we consider $q$ as a factor denoting the directionality of the antennas at both users and BSs. Specifically, if $q \approx 0$, it indicates that the interfering BSs have high directionality and that inter-band interference can be disregarded, whereas if $q=1$, the user captures the entire transmit power of the interfering BSs.

\subsection{THz SINR Model and Blockage-Aware Rate Model}
Let us define the joint user association and sub-band assignment variable as:
\begin{equation}\label{eqn:AssociationVar}
a_{b,s,n} = \!\left\{\begin{array}{ll}
\!\!\!1, &  \!\!\!\text{if user $n$ is associated with BS $b$ on sub-band $s$} \\
\!\!\!0, & \text { otherwise }.
\end{array}\right.
\end{equation}
Then, the matrix of joint user association and sub-band assignment is given as $\A \in \{0,1\}^{B S \times N}$, such that $a_{b,s,n}$ is the entry of $\A$ at $((b-1)S+s)$-th row and $n$-th column, {and $BS$ means the multiplication of $B$ and $S$.} 

{
We define the matrix representation of the power allocation as ${\p}=\{p_{b,s,n}\} \in \mathbb{R}_+^{B S\times N}, \ \forall b \in \Bset, \ n \in \Nset, s \in \Sset$ where $\pNBS$ is the transmit power from $b$-th BS to $n$-th user on sub-band $s$. Then, the spectral reuse in all BSs, which leads to intra-band interference, is modeled as follows:
\begin{equation}\label{eqn:Intra-band-interfernce}
    I_{b,s,n}(\p,\A) = \Bpsum \Npsum q a_{\bp,s,\np} p_{\bp,s,\np} \abss{h_{\bp,s,n}},
\end{equation}
such that $\Nset^{\prime} \triangleq \Nset \backslash\{n\}$, $\Bset^{\prime} \triangleq \Bset \backslash\{b\}$.
}

In addition to thermal noise, it is essential to consider molecular absorption noise, which is induced by the re-radiation of absorbed signal energy, in the THz band. In particular, as a result of using superconductive materials in the THz frequency, thermal noise can be decreased, therefore molecular absorption noise would be comparable to thermal noise and must be taken into account \cite{Absorption-Noise}. Molecular absorption noise for user $n$ associated to BS $b$ over sub-band $s$ is modeled as follows:
\begin{equation*}
    \hspace{-3cm}\Phi_{b,s,n}(\A,\p) = \assocNBS \pNBS {\lvert{\tilde{h}_{b,s,n}}\rvert}^2
\end{equation*}
\begin{equation}\label{eqn:MolecularNosie}
     + \Bpsum \Npsum q a_{\bp,s,\np} p_{\bp,s,\np} {\lvert{\tilde{h}_{\bp,s,n}}\rvert}^2,
\end{equation}
where the first term is related to the molecular noise of the desired link and the second term is related to the molecular noise of the interfering links. Also, the channel gain of the molecular absorption noise is given as $\htilde = C \dNB^{-2} (1-\exp(-k(f_s)d_{n,b})) \chi_{b,s,n} {f_s}^{-2}$, such that $ C = \TxGain\RxGain c^2{(4\pi)}^{-2} \ \forall b,n$. Hence, the cumulative interference and molecular absorption noise from all interfering BSs, {which is the summation of $I_{b,s,n}(\p,\A)$ and the second term in \eqref{eqn:MolecularNosie}}, is obtained as:
\begin{equation}\label{eqn:CumulInter}
    \bar{I}_{b,s,n}(\p,\A)=\Bpsum \Npsum q a_{\bp,s,\np}p_{\bp,s,\np}\abss{\bar{h}_{\bp,s,n}},
\end{equation}
where $\hbar = C \dNB^{-2} \chi_{b,s,n} {f_s}^{-2}$.
Then, the SINR at $n$-th user associated with $b$-th BS on sub-band $s$ can be given as\cite{TanvirMobilityAware}:
\begin{equation}\label{eqn:SINR}
    \gamma_{b,s,n} (\A,\p,w) 
     = \frac{\assocNBS\pNBS \h}{\bar{I}_{b,s,n}(\p,\A) \! + \! \assocNBS \pNBS {\lvert{\tilde{h}_{b,s,n}}\rvert}^2 \!\!\!+\!\! N_0 w}.
\end{equation}
Therefore, the achievable rate at $n$-th user associated with $b$-th BS over sub-channel $s$ is stated as follows:
\begin{equation}\label{eqn:Rate-b-s}
 \bar{R}_{b,s,n}(\A,\p,w) = w \log_2 (1+\gamma_{b,s,n}(\A,\p,w)).
\end{equation}
THz transmission is susceptible to blockage due to high penetration losses \cite{saeidi2023multiband}; thus, we incorporate link blockage probability. We define $\boldsymbol{\Psi} \in {\{0,1\}}^{B\times N}$ as the indicator variable for blockage such that $\boldsymbol{\Psi}_{b,n} = 1$ if there is a blockage between user $n$ and BS $b$, and otherwise $\boldsymbol{\Psi}_{b,n} = 0$. Each element of $\boldsymbol{\Psi}$ follows the following distribution \cite{Blockage-aware1,AntennaGain}:
\begin{equation}\label{eqn:Blockage-Dist}
\boldsymbol{\Psi}_{b,n}= \left\{ \begin{array}{ll}
1, &  \text{With probability} \ \ 1-e^{-\eta d_{b,n}} \\ 0, & \text{With probability} \ \ e^{-\eta d_{b,n}},
\end{array}\right.
\end{equation}
where $\eta$ is the density of blockers in a given area. For a specific $\eta$, the likelihood of blocking a link between a user and BS increases exponentially with their distance. 
The random blockage matrix $\boldsymbol{\Psi}$ is multiplied by the channel between a user and BS. The blockage-aware rate between a user and BS is thus modeled as follows: If $\boldsymbol{\Psi}_{b,n} = 0$, the rate is $R_{b,s,n}(\A,\p,w,\boldsymbol{\Psi})=\bar{R}_{b,s,n} (\A,\p,w)$, otherwise $R_{b,s,n}(\A,\p,w,\boldsymbol{\Psi})=0$.

\section{Problem Formulation, Constraints, and Transformations}
\subsection{Problem Formulation}
The  sum-rate maximization problem in a multi-user, multi-BS downlink THz network is formulated in the following:
\begin{eqnarray}
\label{prob1} \pr_1:\hspace*{-4mm}
&&\hspace*{2mm}\underset{\A,\p,w,w_E,w_I,S}{\textrm{maximize}} \,\, \,\, \Bsum \Nsum \Ssum R_{b,s,n}\\
\mbox{s.t.}\hspace*{-3mm}
&&\hspace*{-2mm}\mathbf{C_1}: \Nsum \Ssum \assocNBS \pNBS \leq P_b^{\max}, \forall b \in \Bset,\notag\\
&&\hspace*{-2mm}\mathbf{C_2}: \Bsum \assocNBS \leq 1, \forall s \in \Sset, n \in \Nset,\notag\\
&&\hspace*{-2mm} \mathbf{C_3}:\Nsum \assocNBS = 1, \forall b \in \Bset, s \in \Sset,\notag\\
&&\hspace*{-2mm}\mathbf{C_4}: \Gamma_{n}^L \leq \Bsum \Ssum \assocNBS \leq S, \forall n \in \Nset,\notag\\
&&\hspace*{-2mm} \mathbf{C_5}: w_I + Sw + (S-1)w_G + w_E= w_{T},\notag\\
&&\hspace*{-2mm} \mathbf{C_6}: w \leq \frac{f_I B_{\mathrm{th}} + w_I B_{\mathrm{th}} + (s-1)B_{\mathrm{th}}}{1-(0.5-s)B_{\mathrm{th}}}, \forall s \in \Sset,\notag\\
&&\hspace*{-2mm}\mathbf{C_7}:|k(f_E-w_E)-k(f_I+w_I)|\leq \epsilon,\notag\\
&&\hspace*{-2mm} \mathbf{C_8}:\assocNBS \in \{0,1\}, \forall n\in\Nset, b \in \Bset, s \in \Sset\notag.
\end{eqnarray}
In $\mathcal{P}_1$, $\mathbf{C_1}$ constrains the maximum transmit power limit at each BS, where $P_b^{\max}$ is the maximum transmit budget of $b$-th BS. {In order to leverage multi-connectivity, constraint $\mathbf{C_2}:\Bsum \assocNBS \leq 1, \forall s \in \Sset, n \in \Nset$ is taken into account, which limits a user to be associated with only one BS over the same sub-band, while allowing the association of a user with more than one BS across different sub-bands.}
$\mathbf{C_3}$ guarantees that each sub-band can only be assigned to one user in a given BS. 
$\mathbf{C_4}$ considers the fairness of users by imposing $\Gamma_{n}^L$ as the respective lower limit on the total number of sub-bands that can be assigned to a user. $\mathbf{C_5}$ constrains the total bandwidth of the TW of interest.
$\mathbf{C_6}$ avoids the beam squint phenomena, and $\mathbf{C_7}$ ensures the molecular absorption coefficient do not vary beyond $\epsilon$ within the TW. The edge bands, i.e., $w_E$ and $w_I$, can guarantee this condition. $\mathbf{C_8}$ denotes that the joint user association and sub-band assignment variables are binary.

The problem $\pr_1$ is a mixed-integer non-linear programming (MINLP) problem and is thus non-convex. {The nonlinearity of the problem stems from the fact that the objective function is formulated as a fractional function of user association, sub-band assignment, and power allocation variables, along with their pairwise multiplication. Additionally, the problem is combinatorial due to the involvement of binary variables and non-convex due to the non-convexity of the molecular absorption coefficient in $\mathbf{C_7}$.} {An additional hurdle in tackling problem $\pr_1$ lies in determining the appropriate values for the number of sub-bands and their respective bandwidths. This complexity arises from the objective function, which involves a summation across the number of sub-bands. Consequently, finding the optimal values for $S$ and $w$ is not a straightforward task as the two variables are intertwined.} Furthermore, when considering a given $S$, the optimization of bandwidth cannot be conducted independently, as it is intrinsically tied to the total number of sub-bands. 
We decompose and solve the problem in two steps. First, we obtain a feasible solution for the bandwidths of the leading and trailing edge spectrum, the number of sub-bands, and their transmission bandwidths by proposing a convexification of the molecular absorption coefficient. Then, joint user association and sub-band assignment is performed in an optimal manner leveraging the uni-modularity of the problem followed by power optimization in an iterative manner using alternating optimization.

\subsection{{Molecular Absorption Coefficient}}
{
The molecular absorption coefficient, which is negligible in lower frequencies, is an important aspect of the THz band and can be different based on various environmental conditions, such as pressure, gas molecules, and so on \cite{MBN-Survey}. The molecular absorption coefficient of the isotopologue $t$ of gas $g$ for a molecular volumetric density at temperature $T$ and pressure of $p$ can be obtained as:   
\begin{equation}
    k(f) = \sum_{(t,g)} \frac{p^2 T_{\mathrm{sp}} q^{(t,g) } N_A S^{(t,g)} f \tanh{\left(\frac{h c f}{2 k_b T} \right)}}{p_0 T_k VT^2 f^{(t,g)}_c \tanh{\left(\frac{h c f^{(t,g)}_c}{2 k_b T} \right)}}F^{(t,g)}(f),
\end{equation}
where $p_0$ is the reference pressure at 1 atm, $T_{\mathrm{sp}}$ is the temperature at standard pressure, and the mixing ratio of gases is $q^{(t,g)}$. $N_A$ is Avogadro's constant, and $V$ is the gas constant. $S^{(t,g)}$ is the line intensity, showing the strength of the absorption by a specific type of molecule. $f$ denotes the frequency of the electromagnetic wave, and $f^{(t,g)}_{c}$ represents the resonant frequency of gas $g$. $h$ and $k_b$ denote the Planck's and Boltzmann's constants, respectively. For the frequency $f$, the Van Vleck-Weisskopf asymmetric line shape is given by:
\begin{equation}
    F^{(t,g)}(f) \!=\! \frac{100 c \alpha^{(t,g)}f}{\pi f^{(t,g)}_c} \!\! \left(\!\frac{1}{G^2 + \left(\alpha^{(t,g)} \right)^2} + \frac{1}{H^2 + \left(\alpha^{(t,g)} \right)^2} \! \right)\!,
\end{equation}
such that $G = f + f_c^{(t,g)}$, $H = f - f_c^{(t,g)}$, and the Lorentz half-width is obtained by:
\begin{equation}
    \alpha^{(t,g)} = \left(\left(1-q^{(t,g)} \right) \alpha^{(t,g)}_{\mathrm{air}} + q^{(t,g)} \alpha^{(t,g)}_0 \right) \left(\frac{p}{p_0} \right) \left(\frac{T_0}{T} \right)^l,
\end{equation}
where the air broadened half-widths, $\alpha^{(t,g)}_{\mathrm{air}}$, self-broadened half-widths, $\alpha^{(t,g)}_{0}$, and temperature broadening coefficient $l$ can be obtained directly from the high-resolution transmission molecular absorption (HITRAN) database \cite{HITRAN}.
}

\subsection{Convexifying the Molecular Absorption Coefficient}
{After obtaining the molecular absorption coefficient based on the previous section calculations and HITRAN database, we need to find an analytical expression for the molecular absorption coefficient as a function of frequency to enable us to derive the widths of the leading and trailing edge sub-bands, namely $w_I$ and $w_E$.} Additionally, using a convex function that spans the entire TW would be advantageous, allowing the use of numerical techniques like interior point methods to effectively address the feasibility problem in determining $w_I$ and $w_E$. Therefore, we propose the following convex approximation of the molecular absorption coefficient using curve-fitting techniques.
\begin{equation}\label{eqn:k(f)}
    \bar{k}(f) = t_1 \exp\left(\frac{-1}{{(t_2 f + t_3)}^2}\right) + t_4,
\end{equation}
where $t_1,t_4>0$, and $t_2,t_3$ are the fitted curve parameters. Table II presents values for curve-fitting parameters utilizing the suggested function for a variety of THz transmission windows. Note that $\bar{k}(f)$ is a convex function when $\frac{-\sqrt{6}-3t_3}{3t_2} \leq f \leq \frac{\sqrt{6}-3t_3}{3t_2}$, which can be satisfied in the TW of interest by selecting appropriate $f_I$ and $f_E$. Since the proposed function is twice differentiable as shown below, the convexity of the approximated molecular absorption coefficient can be verified.
\begin{equation}\label{eqn:ConvexityOfK(f)}
    \frac{d^2\bar{k}(f)}{d f^2} = \frac{-2t_1 t_2^2 \exp\left(\frac{-1}{{(t_2 f + t_3)}^2}\right)\left(3{(t_2 f + t_3)}^2-2\right)}{{(t_2 f + t_3)}^6}. \nonumber
\end{equation}
\subsection{Leading and Trailing Edge Bands of the TW}
{For a given TW, $f_I$, $f_E$, and curve-fitting parameters of the respective TW obtained from Table II, we can use the convexity of $\bar{k}(f)$ and solve the following set of inequalities 
derived from constraint $\mathbf{C_7}$ to obtain the feasible values of $w_E$ and $w_I$.
\begin{subequations}\label{eqn:MolecApprxSystemOfIneq}
\begin{align}
\bar{k}(f_E-w_E)-\bar{k}(f_I+w_I) \leq \epsilon \label{eqn:MolAbsAppr1}\\
\bar{k}(f_I+w_I)-\bar{k}(f_E-w_E) \leq \epsilon, \label{eqn:MolAbsAppr2}
\end{align}
\end{subequations}
{Considering $f_I + w_I < f_e - w_E$,} by using the interior-point method for solving \eqref{eqn:MolecApprxSystemOfIneq}, the feasible values of leading and trailing edge bands, i.e., $\hat{w}_I$ and $\hat{w}_E$ are obtained. After having $\hat{w}_I$ and $\hat{w}_E$ obtained, we determine the total number of sub-bands and their respective bandwidth as discussed in the next subsection.}
\begin{table}[]
\caption{Curve-fitting parameters based on the approximated function of molecular absorption coefficient, $\bar{k}(f)$.}
\centering
\begin{tabular}{|c|c|c|c|c|c|}
\hline
{\textbf{TW Label}} & \textbf{TW Range (THz)} & $t_1$ & $t_2$    & $t_3$   & $t_4$  \\ \hline
{$\text{TW}_1$} & 0.448 - 0.531                            & 1.1   & -14.5233 & 7.1063  & 0.0173 \\ \hline
{$\text{TW}_2$} & 0.624 - 0.722                            & 0.8   & 11.3600  & -7.6442 & 0.0139 \\ \hline
{$\text{TW}_3$} & 0.78 - 0.915                             & 0.5   & 9.6221   & -8.1526 & 0.0139 \\ \hline
{$\text{TW}_4$} & 0.997 - 1.063                             & 1.2   & -21.6372 & 22.28   & 0.0882 \\ \hline
\end{tabular}
\label{tab:Curve-fitting}
\end{table}

\subsection{Bound on the Number of THz  Sub-bands with Beam-Squint}
{Acquiring the optimal $S$ and $w$ presents a challenge as they are interwined and due to the inclusion of a summation over the variable $S$ in the objective function of problem $\pr_1$.}
Hence, for the given $\hat{w}_I$ and $\hat{w}_E$, a feasible lower bound on the number of sub-bands required to overcome beam-squint can be obtained using constraint $\mathbf{C_6}$. We first recast constraint $\mathbf{C_6}$ as
$\frac{\frac{w}{B_{\mathrm{th}}}-f_I - \hat{w}_I + 0.5 w + w_G}{w+w_G} \leq s, \forall s \in \Sset.
$
When $\mathbf{C_6}$ is satisfied for $\min(\Sset) = 1$, it avoids the beam-squint at all sub-bands, thus by substituting $s=1$, we get:
\begin{equation}\label{eqn:BeamSplit3}
    \frac{w}{B_{\mathrm{th}}} - f_I - \hat{w}_I -0.5 w \leq 0.
\end{equation}
By substituting $w = \frac{w_T-\hat{w}_E-\hat{w}_I-(S-1)w_G}{S}$ from $\mathbf{C_5}$ into \eqref{eqn:BeamSplit3}, the lower bound on $S$ is given by:
\begin{equation}\label{eqn:LowerBound-S}
    S_{\mathrm{LB}} \!\geq \!\!\Big\lceil\frac{2(w_T\!-\!\hat{w}_I\!+\!w_G\!-\!\hat{w}_E)\!+\!B_{\mathrm{th}}(-w_T\!+\!\hat{w}_I\!-\!w_G\!+\!\hat{w}_E)}{2w_G\!+\!2B_{\mathrm{th}}(f_I+\hat{w}_I)+B_{\mathrm{th}}w_G}\Big\rceil.
\end{equation}
{
Then, using $S_{\mathrm{LB}}$, the respective bandwidth is given as follows:
\begin{equation}\label{eqn:Bandwidth}
    w = \frac{w_T-\hat{w}_E-\hat{w}_I-(S_{\mathrm{LB}}-1)w_G}{S_{\mathrm{LB}}}.
\end{equation}
\begin{lemma}\label{lemma1_Decreasing_S}
    For a given power allocation and user association, the objective function in $\pr_1$, i.e., $\Bsum \Nsum \Ssum R_{b,s,n}$ is a decreasing function of the total number of sub-bands, $S$. Therefore, the lower bound on $S_{\mathrm{LB}} = S^*$ is optimal.
\end{lemma}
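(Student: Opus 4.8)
The plan is to show that, once the edge bands $\hat{w}_I,\hat{w}_E$ and the guard band $w_G$ are fixed, the number of sub-bands $S$ enters the objective \emph{only} through the per-sub-band bandwidth $w$, and that increasing $S$ strictly shrinks $w$, which in turn strictly reduces the sum-rate. First I would use $\mathbf{C_5}$ to write $w$ explicitly as a function of $S$, namely $w(S)=\frac{w_T-\hat{w}_I-\hat{w}_E+w_G}{S}-w_G$, so that $\frac{dw}{dS}=-\frac{w_T-\hat{w}_I-\hat{w}_E+w_G}{S^{2}}<0$ and $w$ is strictly decreasing in $S$. As a complementary and fully rigorous observation, the total usable bandwidth $Sw=w_T-\hat{w}_I-\hat{w}_E-(S-1)w_G$ decreases with $S$ at rate $-w_G$, since each additional sub-band consumes one more guard interval; thus raising $S$ both narrows every sub-band and wastes more spectrum on guard bands.

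The core step is to prove that each term $R_{b,s,n}=w\log_2(1+\gamma_{b,s,n})$ is non-decreasing in $w$. Holding the association $\A$ and powers $\p$ fixed, and treating the channel gains as essentially $w$-independent (justified because $\mathbf{C_7}$ and the edge bands force $k(f)$ to be nearly constant across the TW, so $f_s$ stays inside a narrow window), the only $w$-dependence inside $\gamma_{b,s,n}$ is the thermal-noise term $N_0 w$. Writing $\gamma_{b,s,n}=\frac{A}{D+N_0 w}$ with fixed signal $A=\assocNBS\pNBS\h$ and fixed interference-plus-self-absorption noise $D=\bar{I}_{b,s,n}(\p,\A)+\assocNBS\pNBS\htilde$, I would differentiate $f(w)=w\ln\!\left(1+\frac{A}{D+N_0 w}\right)$; setting $u=D+N_0 w$ and using $N_0 w=u-D\le u$ (as $D\ge 0$), the claim $f'(w)\ge 0$ reduces to the elementary inequality $\ln(1+x)\ge\frac{x}{1+x}$ with $x=A/u$. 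This is the delicate part and the main obstacle: $w$ appears twice with opposing effects---linearly in the pre-log factor (favoring large $w$) and through the thermal noise inside the logarithm (favoring small $w$, since a narrower band collects less noise and raises the SINR)---so monotonicity is not obvious a priori, and the argument must establish that the pre-log effect dominates.

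Having shown each $R_{b,s,n}$ is increasing in $w$ and that $w$ is strictly decreasing in $S$, I would conclude that $\Bsum\Nsum\Ssum R_{b,s,n}$ decreases in $S$. The one remaining subtlety is that the index set over $s$ itself grows with $S$, since $\mathbf{C_3}$ forces exactly $BS$ active links; I would handle this at the per-BS level, where for a fixed power budget the aggregate contribution of a BS can be written as its usable bandwidth $Sw$ times a bounded spectral efficiency, so that fewer and wider sub-bands dominate the many-narrow-band configuration. Finally, since the objective is decreasing in $S$, the optimal choice is the smallest feasible $S$; because $\mathbf{C_6}$ (beam-squint) together with $\mathbf{C_5}$ makes $S_{\mathrm{LB}}$ precisely this smallest feasible value, we obtain $S^{*}=S_{\mathrm{LB}}$, which completes the proof.
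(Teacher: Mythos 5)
Your core calculation is correct and is actually the sharpest ingredient here: the monotonicity of $w\mapsto w\ln\bigl(1+\tfrac{A}{D+N_0w}\bigr)$ for fixed $A,D\ge 0$, reduced via $u=D+N_0w$ to $\ln(1+x)\ge\tfrac{x}{1+x}$, is rigorous and needs no approximation. The gap is that you prove it for the wrong comparison. When $S$ increases under a fixed per-BS power budget, the per-sub-band signal power $A$ and the per-sub-band interference $D$ are not fixed --- they scale roughly as $1/S$ because the budget is split over more sub-bands --- and simultaneously the number of summands over $s$ grows. So ``each term is nondecreasing in $w$ with $A,D$ held fixed'' does not by itself compare the $S$-sub-band configuration to the $(S{+}1)$-sub-band one. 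Your last paragraph acknowledges exactly this but then only asserts the conclusion: writing the per-BS aggregate as $Sw$ times a \emph{bounded} spectral efficiency proves nothing, since a bound on the spectral efficiency does not prevent it from growing with $S$ fast enough to offset the loss in $Sw$. That assertion is the entire content of the lemma, so as written the proof is incomplete.

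The fix is available with your own tools, and it would in fact be cleaner than the paper's argument: with equal splits of power and bandwidth, the aggregate rate of a link is $Sw\log\bigl(1+\tfrac{P/S}{I/S+N_0w}\bigr)=(Sw)\log\bigl(1+\tfrac{P}{I+N_0(Sw)}\bigr)$, a function of the usable bandwidth $Sw$ alone; your monotonicity lemma (applied with ``$w$'' replaced by $Sw$, $A=P$, $D=I$) shows this is nondecreasing in $Sw$, and $Sw=\bar{w}-(S-1)w_G$ is strictly decreasing in $S$ because of guard-band overhead, which closes the argument. For comparison, the paper parametrizes the assignment by fractions $\alpha,\beta$ of sub-bands, splits the power equally, and factors the rate as $f_W(S)f_R(S)$ with $f_W$ decreasing (guard bands) and $f_R$ increasing (shrinking noise bandwidth); it then argues heuristically that $N_0$ is negligible and $\lvert f_W\rvert\gg\lvert f_R\rvert$ so the product decreases. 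Your inequality, applied at the aggregate level, replaces that heuristic step with an exact monotonicity statement, so completing your route is worthwhile --- but you must actually carry out the aggregation rather than gesture at it.
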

\begin{proof}
    Let us assume that $\lceil \alpha S \rceil \leq S$ sub-bands are assigned to a given user, such that $\alpha \in [0,1]$. Then, the achievable rate of the user is given as follows:
    \begin{align}
            &R_{n} \!=\! \Bsum \!\sum_{s=1}^{\lceil \alpha S \rceil}\! \frac{\bar{w}\!-\!(S\!-\!1)w_G}{S}\!\log\! \bigg(\!1\!+\! \frac{\frac{B P_b^{\max}}{\lceil \alpha S \rceil}}{\frac{B P_b^{\max}}{\lceil \beta S \rceil}\! + \!\!N_0(\frac{\bar{w}\!-\!(S\!-\!1)w_G}{S})}\!\bigg) \notag \\
            & = \!B \lceil \alpha S \rceil \frac{\bar{w}\!-\!(S\!-\!1)w_G}{S}\!\log\! \bigg(\!1\!+\!\frac{\frac{B P_b^{\max}}{\alpha}}{\frac{B P_b^{\max}}{\beta} \!+\!\! N_0(\bar{w}\!-\!(S\!-\!1)w_G)}\!\bigg)\notag \\
             &= f_{W}(S) f_R(S),
    \end{align}
    where $\bar{w} = w_T-\hat{w}_E-\hat{w}_I$. $\alpha$ and $\beta$ determine the portion of BSs and sub-bands assigned to a given user. It can be seen that $f_W(S)$ is a decreasing function of $S$, while $f_R(S)$ is an increasing function of $S$. However, due to the near-to-zero value of $N_0$, $N_0(\bar{w}-(S-1)w_G)$ is negligible compared to the first term in the denominator, even for large values of $S$. Also, the absolute value of $f_W(S)$ is always greater than that of $f_R(S)$, i.e., $\lvert{f_W(S)}\rvert >> \lvert{f_R(S)}\rvert$ due to the large bandwidths in the THz band. Therefore, the product of these two functions is monotonically decreasing with respect to $S$. Based on the similar discussion for the rest of the users, we can conclude that the objective function $\Bsum \Nsum \Ssum R_{b,s,n}$ is a decreasing function of $S$.
\end{proof}
Based on the result of \textbf{Lemma \ref{lemma1_Decreasing_S}}, we can select $S_{\mathrm{LB}}$ as the optimal value for the total number of sub-bands and their respective bandwidth using \eqref{eqn:Bandwidth}.
}


\subsection{Transformation of the Objective Function}
The optimization problem $\pr_1$ is MINLP and the non-linearity comes from the objective function since $\gam(\A,\p)$ is a non-convex function of the joint power allocation, user association, and sub-band assignment variables. To handle the non-convexity of the objective function, we transform the objective function using the following Lemma.
\begin{lemma}\label{Lemma2_3D_To_2D}
By reducing the dimension of the power allocation variables from $B\times S\times N$ to $B\times S$ using $\Nsum \assocNBS \pNBS = {p}_{b,s}$, the objective function in $\pr_1$ can  be written as $f(\A,\p) = \Bsum\Nsum\Ssum \assocNBS w \log(1+\gam(\p))$.
\end{lemma}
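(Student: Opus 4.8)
The plan is to exploit the two assignment constraints $\mathbf{C_2}$ and $\mathbf{C_3}$ to collapse the three-index power variable $\pNBS$ into the aggregate $p_{b,s}=\Nsum\assocNBS\pNBS$ without changing the objective value. First I would note that every summand of $\Bsum\Nsum\Ssum R_{b,s,n}$ carries the factor $\assocNBS$ through the numerator of $\gam$: a triple $(b,s,n)$ with $\assocNBS=0$ forces $\gam=0$ and hence a zero rate (the random blockage indicator only rescales the effective channel gains and plays no role in this step). Consequently the objective can be rewritten identically as $\Bsum\Nsum\Ssum\assocNBS\, w\log(1+\gam)$, and it remains only to verify that each SINR evaluated on an \emph{active} link ($\assocNBS=1$) depends on the powers solely through the two-index quantities $\{p_{b,s}\}$.

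Next I would dispose of the desired-signal and molecular self-noise terms. By $\mathbf{C_3}$, for each pair $(b,s)$ exactly one user is served, so $p_{b,s}=\Nsum\assocNBS\pNBS$ reduces to the single active power; hence on an active link $\assocNBS\pNBS=p_{b,s}$. This substitution replaces both the numerator $\assocNBS\pNBS\h$ and the self-noise $\assocNBS\pNBS\htilde$ by the corresponding expressions in $p_{b,s}$ directly.

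The crux is the cumulative interference term $\bar{I}_{b,s,n}=\Bpsum\Npsum q\,a_{b',s,n'}p_{b',s,n'}\hbpbar$. I would first observe that the interference channel $\hbpbar$ is indexed by the receiving user $n$ and the interfering BS $b'$ only, not by the interferer's served user $n'$, so it factors out of the inner sum. The remaining inner sum runs over $n'\neq n$, and here $\mathbf{C_2}$ is indispensable: since the considered link is active and $\mathbf{C_2}$ forbids user $n$ from being served by any second BS on sub-band $s$, we have $a_{b',s,n}=0$ for every $b'\neq b$; therefore removing the excluded index $n'=n$ changes nothing, and $\Npsum a_{b',s,n'}p_{b',s,n'}=\sum_{n'\in\mathcal{N}}a_{b',s,n'}p_{b',s,n'}=p_{b',s}$. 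The interference thus collapses to $\Bpsum q\,p_{b',s}\hbpbar$.

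Substituting these three reductions into \eqref{eqn:SINR} yields an SINR $\gam(\p)$ that is a function of the two-index powers alone, so the objective equals $\Bsum\Nsum\Ssum\assocNBS w\log(1+\gam(\p))$ as claimed. I expect the interference step to be the main obstacle, not because it is computationally heavy but because it is the one place where the argument would break without $\mathbf{C_2}$: one must justify that the exclusion $n'\neq n$ is vacuous, which hinges on the at-most-one-BS-per-(sub-band,user) structure and not on $\mathbf{C_3}$ alone. The remaining manipulations are routine bookkeeping.
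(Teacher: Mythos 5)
Your proposal is correct and follows essentially the same route as the paper: $\mathbf{C_3}$ collapses the numerator and self-noise terms to $p_{b,s}$, and $\mathbf{C_2}$ is used to show that on an active link the excluded index $n'=n$ contributes nothing to the interference sum, so $\Npsum a_{\bp,s,\np}p_{\bp,s,\np}=p_{\bp,s}$ (the paper phrases this as the subtracted term $\Bpsum q\,a_{\bp,s,n}p_{\bp,s,n}\abss{\bar{h}_{\bp,s,n}}$ vanishing when $\assocNBS=1$). Your identification of the interference step as the one place where $\mathbf{C_2}$ is indispensable matches the paper's reasoning exactly.
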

\begin{proof}
Considering the fact that each BS transmits to only one user over a given sub-band, i.e., equation in $\mathbf{C_3}$, we have $\Nsum \assocNBS \pNBS = {p}_{b,s}$. Also, 
$
    \Npsum a_{\bp,s,\np}p_{\bp,s,\np} = {p}_{\bp,s} - a_{\bp,s,n}p_{\bp,s,n}.
$
The cumulative interference can thus be written as follows:
\begin{equation}\label{eqn:2D-Interference}
\bar{I}_{b,s,n}(\p,\A) \!=\!\!\! \Bpsum\!\! q{p}_{\bp,s} \abss{\bar{h}_{\bp,s,n}}\!-\!\!\Bpsum\!\! q a_{\bp,s,n} p_{\bp,s,n} \abss{\bar{h}_{\bp,s,n}}.
\end{equation}
Since each user can only associate with one BS over a given sub-band, i.e., equation in $\mathbf{C_2}$,
the second term in \eqref{eqn:2D-Interference} is zero if the association in the numerator of $\gam$ is one, due to having $\assocNBS + \Bpsum a_{\bp,s,n} = 1$. Otherwise, the SINR and the rate are zero. A similar analogy can be used for the numerator, hence we can take the binary variable out of the logarithm and both the numerator and denominator while considering the 2D power allocation. Thus, the objective function can be replaced with $f(\A,\p) = \Bsum\Nsum\Ssum \assocNBS w \log(1+\gam(\p))$.
\end{proof}

\section{Proposed Algorithm for WSR Maximization in a Multi-Cell THz Network}
In this section, we present a two-stage algorithm, where the initial stage determines the edge bands, total number of sub-bands, and bandwidth as described in the previous section. The second stage  iteratively optimizes the joint user association and sub-band assignment, and power allocation variables. 
\subsection{Joint User Association and Sub-band Assignment} 
Given $\{{S^*},w,\hat{w}_I,\hat{w}_E,\p\}$, $\pr_1$ can be rewritten as follows:
\begin{eqnarray}
\label{prob2} \pr_2:\hspace*{-4mm}
&&\hspace*{2mm}\underset{\A}{\textrm{maximize}} \,\, \,\, \Bsum\Nsum\Ssum w \log(1+\gam(\A))\\
\mbox{s.t.}\hspace*{-4mm}
&&\mathbf{C_2},\ \mathbf{C_3},\ \mathbf{C_4},\  \mathbf{C_8}.\notag
\end{eqnarray}
{
Similar to the proof of Lemma~\ref{Lemma2_3D_To_2D}, employing constraint $\mathbf{C_3}$ results in $\Nsum \assocNBS \pNBS = {p}_{b,s}$, causing the power budget constraint in $\mathbf{C_1}$ to become $\Ssum p_{b,s}\leq P_b^{\max}, \ \forall b$. Consequently, $\mathbf{C_1}$ can be eliminated from the problem of joint user association and sub-band assignment since it no longer contains the binary variable.} 
Due to the unimodular structure of $\pr_2$, we can solve the problem \textit{optimally} without utilizing greedy or matching solutions. To this end, we need to define $\mathbf{a} = [{a}_1,\dots,{a}_i,\dots,{a}_{BSN}]^T \in {\{0,1\}}^{BSN\times 1}$, consisting of all association and assignment variables, such that ${a}_i = \assocNBS$, and $i=n+N(s-1)+NS(b-1)$. In the following, all constraints are written in matrix format. 
\begin{subequations}\label{eqn:Constraints_Matrix}
\begin{align}
&\mathbf{C}:[\mathbf{I}_{NS\times NS},\dots,\mathbf{I}_{NS\times NS}] = \mathbf{1}^{T}_{B\times 1}\otimes \mathbf{I}_{NS\times NS}, \\
&\mathbf{D}: \text{diag}(\mathbf{L},B), \ \mathbf{L} = \mathbf{I}_{S\times S}\otimes \mathbf{1}^T_{N\times 1},\\
&\mathbf{E}:[\mathbf{I}_{N\times N},\dots,\mathbf{I}_{N\times N}]=\mathbf{1}^T_{BS\times 1} \otimes \mathbf{I}_{N\times N}, \\
&\mathbf{F}: [-\mathbf{I}_{N\times N},\dots,-\mathbf{I}_{N\times N}]=-\mathbf{1}^T_{BS\times 1} \otimes \mathbf{I}_{N\times N}=-\mathbf{E}.
\end{align}
\end{subequations}
In \eqref{eqn:Constraints_Matrix}, $\mathbf{C}$ and $\mathbf{D}$ represent constraints $\mathbf{C_2}$ and $\mathbf{C_3}$, respectively. Matrices $\mathbf{E}$ and $\mathbf{F}$ can be obtained by dividing the inequalities in constraint $\mathbf{C_4}$ into two constraints, such that $\mathbf{E}:  \Bsum \Ssum \assocNBS \leq S, \forall n \in \Nset$, and $\mathbf{F}: -\Bsum \Ssum \assocNBS \leq -\Gamma_{n}^L, \forall n \in \Nset$. 
Moreover, the right-hand side of the above constraints can be reformulated as $\mathbf{k}_1 = \mathbf{1}_{NS\times 1}$, $\mathbf{k}_2 = \mathbf{1}_{BS\times 1}$, $\mathbf{k}_3 = [S,\dots,S] \in \mathbb{N}^{NB\times 1}$, and $\mathbf{k}_4 = [-\Gamma^L_{1},\dots,-\Gamma^L_{N}]=\mathbf{\Gamma}^L \in \mathbb{N}^{NB\times 1}$.
Finally, using the above-mentioned, the optimization problem $\pr_2$ can be reformulated as follows:
\begin{eqnarray}
\label{prob3} \pr_3:\hspace*{-4mm}
&&\hspace*{2mm}\underset{\mathbf{a}}{\textrm{maximize}} \,\, \,\, \sum\limits_{i=1}^{BSN} \mathbf{q}^T\mathbf{a}\\
\mbox{s.t.}\hspace*{-4mm}
&& \hspace{-1mm}{\mathbf{C}}_\mathbf{2}:\mathbf{C}\mathbf{a}\preceq \mathbf{k}_1, \hspace*{1mm}{\mathbf{C}}_{\mathbf{3}}:\mathbf{D}\mathbf{a}= \mathbf{k}_2, \hspace*{1mm}{\mathbf{C}}_{\mathbf{4a}}:\mathbf{E}\mathbf{a}\preceq \mathbf{k}_3, \notag \\
&&\hspace{-1mm}{\mathbf{C}}_{\mathbf{4b}}: \mathbf{F}\mathbf{a}\preceq \mathbf{k_4}, \hspace*{1mm}
{\mathbf{C}}_\mathbf{8a}:{a}_i \in \{0,1\}, \forall i \in \{1,\dots,BSN\}
\notag,
\end{eqnarray}
where $\mathbf{q}=[ w \log(1+{\gamma}_i(\boldsymbol{p}))]^T, \forall i \in \{1,\dots,BSN\}$, such that ${\gamma}_i$ is the $i$-th element of the vector form of its respective matrix and $i$ is obtained similar to the index of $a_i$. In what follows, leveraging the concepts of unimodularity, we prove that the integer linear programming problem $\pr_3$ can be converted into a  linear programming problem without loss of optimality, and the solutions are integer values. To this end, we first define $\mathbf{T}=[\mathbf{C}^T,\mathbf{D}^T,\mathbf{E}^T,\mathbf{F}^T]^T$ as the constraint matrix of $\pr_3$, and $\mathbf{k}=[\mathbf{k}_1^T,\mathbf{k}_2^T,\mathbf{k}_3^T,\mathbf{k}_4^T]^T$.

\begin{definition} [Totally Unimodular Matrix (TUM)]
    A matrix $\mathbf{M}$ is TUM if each square submatrix has a determinant of 0, 1, or -1, i.e., all entries are either 0 or $\pm$ 1 \cite{schrijver1998theory}. 
\end{definition}

\begin{theorem}\label{Theo_1}
     If the constraint matrix $\mathbf{T}$ is totally unimodular, and $\mathbf{k}$ is integer, problem $\pr_3$ can be relaxed by removing the integer constraint, and the resulting problem, which is linear programming, can be solved via the dual-simplex method without loss of optimality.  
\end{theorem}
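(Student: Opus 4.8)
The plan is to prove the statement in the standard way for integral polyhedra: take the total unimodularity (TUM) of $\mathbf{T}$ together with the integrality of $\mathbf{k}$ and show that \emph{every} vertex of the relaxed feasible region is integral, so that the continuous relaxation of $\pr_3$ has an optimal basic feasible solution with $\{0,1\}$ entries. Since the dual-simplex method terminates at an optimal vertex, it then returns exactly such an integral point, which is feasible and optimal for the original integer program, giving ``no loss of optimality.'' The single engine is the Hoffman--Kruskal integrality theorem. However, to make the theorem applicable to the specific matrix $\mathbf{T}=[\mathbf{C}^T,\mathbf{D}^T,\mathbf{E}^T,\mathbf{F}^T]^T$ assembled in \eqref{eqn:Constraints_Matrix}, I would first \emph{verify} that this $\mathbf{T}$ is indeed TUM, and this verification is the substantive part.

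For the TUM verification I would use two reductions followed by one genuinely delicate step. First, because $\mathbf{F}=-\mathbf{E}$, any square submatrix that picks a row of $\mathbf{E}$ together with its matching row of $\mathbf{F}$ is singular, while a submatrix picking at most one row of each such pair is, up to row sign changes (which leave $\lvert\det\rvert$ invariant), a submatrix of $[\mathbf{C}^T,\mathbf{D}^T,\mathbf{E}^T]^T$; hence $\mathbf{T}$ is TUM iff $[\mathbf{C}^T,\mathbf{D}^T,\mathbf{E}^T]^T$ is. Second, all entries lie in $\{0,\pm 1\}$, and in $[\mathbf{C}^T,\mathbf{D}^T]^T$ each column $a_{b,s,n}$ has exactly one $1$ in a $\mathbf{C}$-row (indexed by $(s,n)$) and one $1$ in a $\mathbf{D}$-row (indexed by $(b,s)$); this is the vertex--edge incidence matrix of a bipartite graph with colour classes $\{\mathbf{C}\text{-rows}\}$ and $\{\mathbf{D}\text{-rows}\}$, which is TUM by the classical two-nonzeros-per-column criterion. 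The hard part is re-introducing $\mathbf{E}$: each column then acquires a \emph{third} $1$ (in the $\mathbf{E}$-row indexed by $n$), and generic three-index assignment matrices are \emph{not} TUM, so this step cannot be waved through.

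To control the coupling rows $\mathbf{E}$ I would verify the Ghouila--Houri criterion: for every subset $R$ of rows I must exhibit a bipartition $R=R_1\cup R_2$ under which every column's signed row-sum lies in $\{-1,0,1\}$. The natural choice is $\{\mathbf{C}\text{-rows}\cup\mathbf{E}\text{-rows}\}$ versus $\{\mathbf{D}\text{-rows}\}$; a column with all three of its rows in $R$ then splits $2$--$1$ and is harmless, so the only danger is an odd cycle in the conflict graph formed by columns having exactly two of their rows in $R$. The key structural observation that rules this out is that a $\mathbf{C}$--$\mathbf{D}$ incidence fixes the sub-band index $s$ while a $\mathbf{C}$--$\mathbf{E}$ incidence fixes the user index $n$; chasing the three edge conditions of a putative $\mathbf{C}$--$\mathbf{D}$--$\mathbf{E}$ triangle forces $n_0=n_1$ and $n_0\neq n_1$ simultaneously, a contradiction, so no such triangle (and, extending the argument, no odd cycle) can arise. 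Equivalently, one may recognise the region $\{\mathbf{a}: \mathbf{C}\mathbf{a}\preceq\mathbf{k}_1,\ \mathbf{D}\mathbf{a}=\mathbf{k}_2,\ \Gamma_{n}^L\le \sum_{b,s} a_{b,s,n}\le S\}$ as the feasible set of a single-commodity flow in which each $(b,s)$ slot emits one unit through a capacity-one node for each $(s,n)$ pair into a per-user sink with throughput bounds, whose constraint matrix is a network matrix and hence TUM. \emph{The main obstacle is exactly this step:} confirming that the per-user aggregation in $\mathbf{E}$ introduces no odd-cycle obstruction, since the quantification over all row subsets is where the specific incidence structure must be used carefully.

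Finally, with $\mathbf{T}$ TUM and $\mathbf{k}=[\mathbf{k}_1^T,\mathbf{k}_2^T,\mathbf{k}_3^T,\mathbf{k}_4^T]^T$ integral, the conclusion follows by Hoffman--Kruskal. Appending the identity columns for the slack variables of the inequalities and for the box $\mathbf{0}\preceq\mathbf{a}\preceq\mathbf{1}$ preserves total unimodularity, so by Cramer's rule every basic feasible solution of the relaxation has the form $\mathbf{a}_B=\mathbf{B}^{-1}\mathbf{k}_B$ with $\det\mathbf{B}=\pm 1$ and integral cofactors, hence is integral; combined with the box it is $\{0,1\}$-valued and satisfies $\mathbf{C_{8a}}$. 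Because the LP optimum is attained at a vertex and the integer-feasible set of $\pr_3$ is precisely the set of integral points of the relaxed polyhedron, the integral LP optimum is feasible and optimal for $\pr_3$, so dropping the integrality constraint incurs no loss of optimality. Since the dual-simplex method terminates at an optimal basic feasible solution, i.e.\ a vertex, it returns exactly this integral optimum, completing the proof.
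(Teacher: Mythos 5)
Your proposal is correct in substance, but it is organized quite differently from the paper. For the theorem as literally stated --- a conditional assertion that total unimodularity of $\mathbf{T}$ plus integrality of $\mathbf{k}$ makes the LP relaxation of $\pr_3$ exact and lets the dual-simplex method return an integral optimum --- the paper's entire proof is a one-line citation to the ``unimodularity implies integrality'' theorem in \cite{lee2004first}; your Hoffman--Kruskal/Cramer's-rule argument is simply that citation unpacked, so on this part you and the paper agree. Where you genuinely diverge is that you also fold in the verification that $\mathbf{T}$ is actually TUM, which the paper deliberately quarantines in Lemma~3 and Appendix~A. The two verifications take different routes: the paper first applies TUM-preserving operations (negating $\mathbf{F}$, deleting the duplicated $\mathbf{E}$, and then \emph{pivoting} on the first identity block of $\mathbf{C}$) to reach a transformed matrix $\mathbf{T}^{(3)}$, and only then runs the Ghouila--Houri row-partition test there, assigning the $\mathbf{C}$-type rows according to whether a congruent-mod-$S$ row of the $\mathbf{D}$-block is also selected; you instead dispose of $\mathbf{F}=-\mathbf{E}$ directly and apply Ghouila--Houri to $[\mathbf{C}^T,\mathbf{D}^T,\mathbf{E}^T]^T$ itself, reducing the test to $2$-colorability of a conflict graph (with a network-matrix reading as a second option). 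Your version avoids the pivoting machinery entirely and lets the ``exactly three ones per column'' structure do the work, which is arguably cleaner; the paper's version keeps the theorem itself as a pure black box and concentrates all matrix-specific effort in the appendix.

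One concrete repair is needed in your Ghouila--Houri step: the fixed bipartition $\{\mathbf{C}\text{-rows}\cup\mathbf{E}\text{-rows}\}$ versus $\{\mathbf{D}\text{-rows}\}$ that you call the natural choice does \emph{not} satisfy the criterion as stated, because a column whose $\mathbf{C}$-row and $\mathbf{E}$-row are both selected while its $\mathbf{D}$-row is not yields a signed row-sum of $2$. What actually saves the argument is the adaptive structure you only sketch: a selected $\mathbf{C}$-row indexed by $(s,n)$ can carry a conflict edge to an $\mathbf{E}$-row only if $\mathbf{E}$-row $n$ is selected, and a conflict edge to a $\mathbf{D}$-row only if $\mathbf{E}$-row $n$ is \emph{not} selected, so every selected $\mathbf{C}$-row is incident to conflict edges of one type only. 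Placing the $\mathbf{C}$-rows whose user's $\mathbf{E}$-row is selected on the $\mathbf{D}$ side, and the remaining $\mathbf{C}$-rows on the $\mathbf{E}$ side, is then an explicit proper $2$-coloring of the conflict graph, so it is bipartite outright and no separate odd-cycle chase is needed. With that adjustment --- which is the analogue of the selection-dependent assignment rule the paper uses in Appendix~A --- your argument is complete and the remainder of your proof (integral vertices, exactness of the relaxation, termination of dual simplex at a vertex) stands.
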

\begin{proof}
    See Theorem (Unimodularity implies integrality) in \cite{lee2004first}.
\end{proof}
In the following, we first perform some operations on the matrix ($\mathbf{T}$) that preserve TUM (as described below) leading to an equivalent matrix $\mathbf{T}^{(3)}$. We then show that  $\mathbf{T}^{(3)}$  is TUM  which in turn implies that  $\mathbf{T}$ is TUM. 
 \begin{definition}[Operations that Preserve Unimodularity]
     Total unimodularity is preserved under transposition,  permutation of rows or columns, adding or removing repetitive rows or columns,  scaling rows or columns by $-1$, and pivoting over any non-zero element \cite{schrijver1998theory}.
 \end{definition}
Consider the constraint matrix $\mathbf{T}=[\mathbf{C}^T,\mathbf{D}^T,\mathbf{E}^T,\mathbf{F}^T]^T$. From the definition of matrix $\mathbf{E}$ and $\mathbf{F}$, we know that $\mathbf{F}= -\mathbf{E}$ (see (\ref{eqn:Constraints_Matrix}d)). Thus, $\mathbf{T}$ can be written as $\mathbf{T} = [\mathbf{C}^T, \mathbf{D}^T, \mathbf{E}^T, -\mathbf{E}^T]^T$. Using \textbf{Definition~2}, we can multiply all rows of matrix $\mathbf{F}$ by -1, resulting in  $\mathbf{T}^{(1)} = [\mathbf{C}^T, \mathbf{D}^T, \mathbf{E}^T, \mathbf{E}^T]^T$. Also, since matrix $\mathbf{E}$ appeared twice, we can eliminate it, which results in the  matrix $\mathbf{T}^{(2)} = [\mathbf{C}^T, \mathbf{D}^T, \mathbf{E}^T]^T$ (as shown in the figure). 
Next, we apply the pivoting operation (defined below) to the non-zero elements of the first $\mathbf{I}$ matrix of $\mathbf{C}$. 
\begin{definition}{Pivoting Operation \cite{schrijver1998theory}}: 
Consider the following $m \times n$ matrix $\mathbf{A}$, the pivoting operation over a non-zero element, like $e$, results in a matrix $\mathbf{B}$ that has the form:
\begin{equation}
A = \left[\begin{array}{c c}
e & \mathbf{c}^T \\ 
\mathbf{b} & \mathbf{D}
\end{array}\right] \longrightarrow
B = \left[\begin{array}{c c}
-e & e\mathbf{c}^T \\ 
e\mathbf{b} & \mathbf{D} - e\mathbf{b}\mathbf{c}^T
\end{array}\right]
\end{equation}
where $\mathbf{b}$ and $\mathbf{c}$ are $(m-1)$ and $(n-1) $ dimensional column vectors, and $\mathbf{D}$ is a $(m-1)\times(n-1)$ matrix. Note that pivoting over the element $-e$ of matrix $\mathbf{B}$ results in the original matrix $\mathbf{A}$. The pivoting operation is critical in establishing the total unimodularity of matrix $T$. 
\end{definition}

\begin{figure}[t!]
\begin{minipage}{0.42\textwidth}
    \includegraphics[scale=0.38]{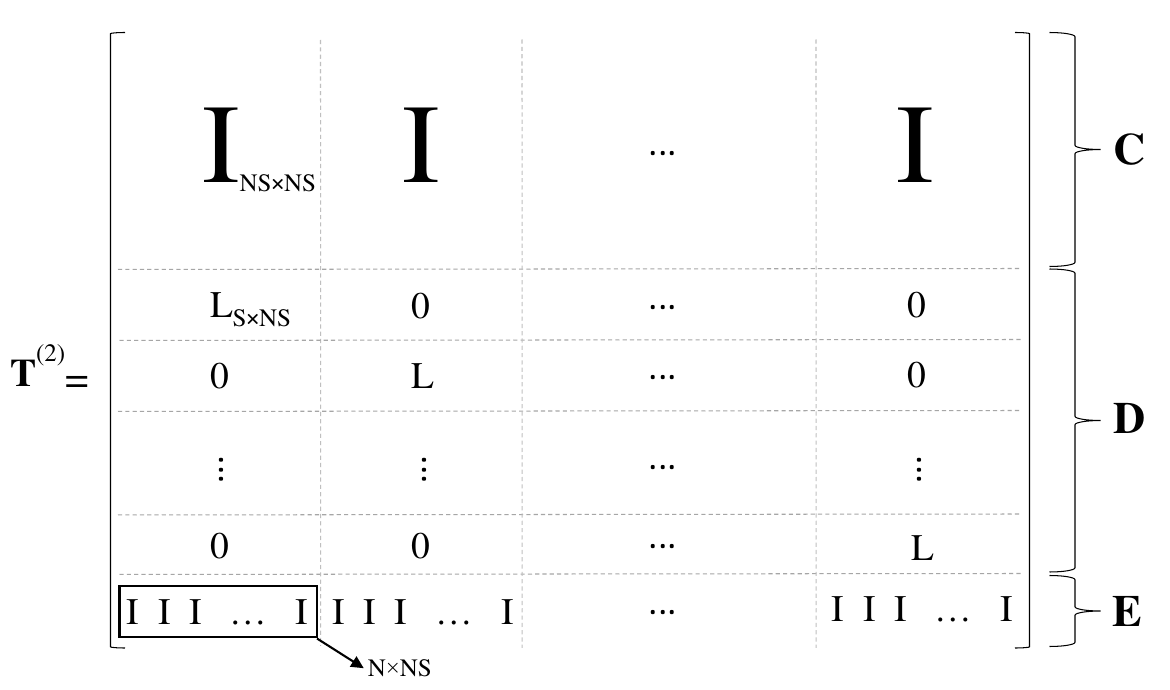}
    \label{T_2}
\end{minipage}
\hfill
\begin{minipage}{0.49\textwidth}
\includegraphics[scale=0.38]{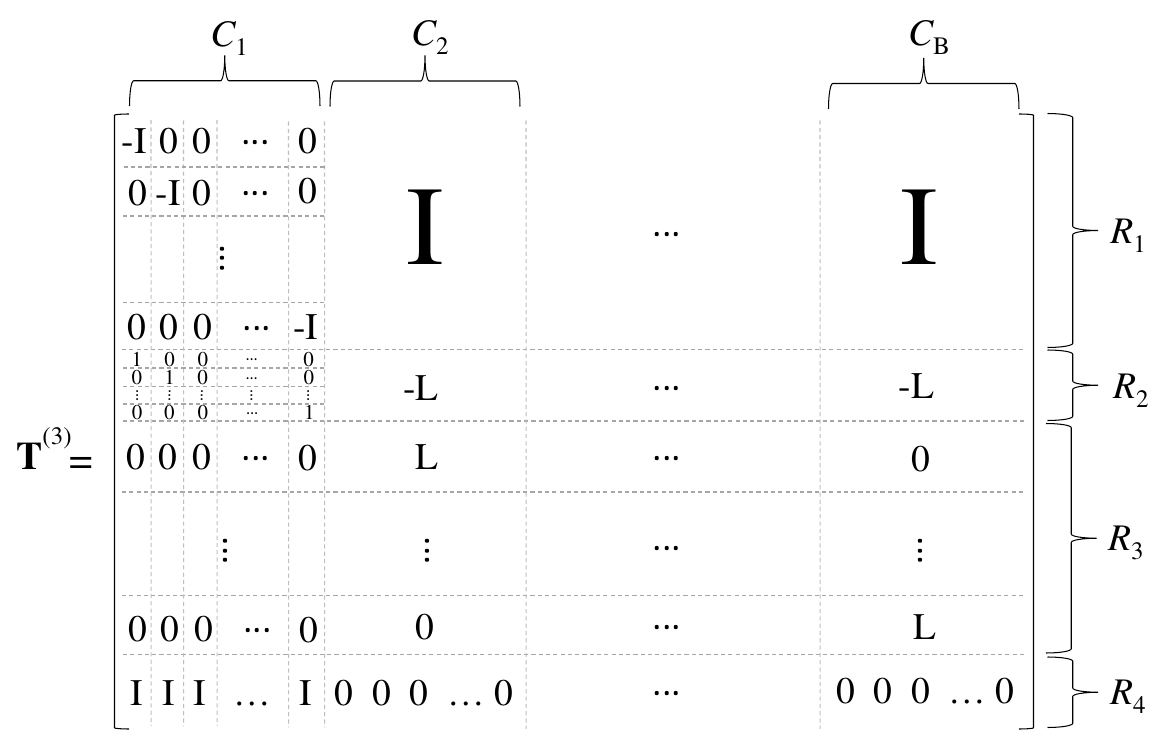}
\label{T_3}
\end{minipage}\vspace{-2mm}
\caption{Graphical illustration of $\mathbf{T}^{(2)}$ and $\mathbf{T}^{(3)}$.} \label{Unimodular_Mat}
\vspace{-4mm}
\end{figure}

After pivoting, the resulting matrix, $\mathbf{T}^{(3)}$, has the structure (as shown in Fig.~2). Since the transformation from $\mathbf{T}$ to $\mathbf{T}^{(3)}$ uses the operations mentioned in \textbf{Definition~2}, and they preserve total unimodularity, it suffices to show that $\mathbf{T}^{(3)}$ is TUM, which in turn proves that $\mathbf{T}$ is TUM. 
\begin{lemma}\label{Lemma_2}
    Matrix $\mathbf{T_3}$ and in turn $\mathbf{T}$ is TUM and $\mathbf{k}$ is integer. Thus, $a_i \in \{0,1\}$ becomes $a_i \geq 0$ and the resulting linear program has the same integer optimum point as $\pr_3$.
\end{lemma}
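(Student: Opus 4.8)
The plan is to verify the two hypotheses of \textbf{Theorem~\ref{Theo_1}} and then read off the conclusion. Integrality of $\mathbf{k}=[\mathbf{k}_1^T,\mathbf{k}_2^T,\mathbf{k}_3^T,\mathbf{k}_4^T]^T$ is immediate, since every entry is one of $1$, $S$, or $-\Gamma_{n}^L$, all integers. The real work is total unimodularity, and because the chain $\mathbf{T}\to\mathbf{T}^{(1)}\to\mathbf{T}^{(2)}\to\mathbf{T}^{(3)}$ uses only the row-scaling, row-deletion, and pivoting operations of \textbf{Definition~2}, which preserve TUM, it suffices to show that $\mathbf{T}^{(3)}$ is TUM; this then back-propagates to $\mathbf{T}$. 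I would also record at the outset that the box constraint $a_i\le 1$ need not be carried: together with $a_i\ge0$, the equalities $\mathbf{C_3}$ (each $\Nsum \assocNBS =1$) already force every $a_i\in[0,1]$, so relaxing $a_i\in\{0,1\}$ to $a_i\ge0$ discards nothing.

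To prove $\mathbf{T}^{(3)}$ is TUM I would argue that every square submatrix has determinant in $\{0,\pm1\}$, by induction on its order. The $1\times1$ case is trivial since all entries are in $\{0,\pm1\}$. For a $k\times k$ submatrix $\mathbf{M}$: if $\mathbf{M}$ has an all-zero column then $\det\mathbf{M}=0$; if some column of $\mathbf{M}$ has a single nonzero entry $\pm1$, cofactor expansion along it gives $\det\mathbf{M}=\pm\det\mathbf{M}'$ with $\mathbf{M}'$ a $(k-1)\times(k-1)$ submatrix, and the inductive hypothesis applies. The only remaining case is that every column of $\mathbf{M}$ contains at least two nonzeros, and here I would invoke the row-partition (Ghouila--Houri) criterion: after the decoupling achieved by the pivoting, the rows of $\mathbf{M}$ can be split into two classes whose signed row-sum vanishes in every column, exhibiting a linear dependence and hence $\det\mathbf{M}=0$. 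Equivalently, one verifies that each structural block of $\mathbf{T}^{(3)}$ displayed in Fig.~2 is either a signed identity or an incidence-type block with the interval/consecutive-ones pattern, each of which is classically TUM.

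The heart of the matter---and the step I expect to be the main obstacle---is exactly why the naive argument on $\mathbf{T}^{(2)}=[\mathbf{C}^T,\mathbf{D}^T,\mathbf{E}^T]^T$ fails and why the pivoting repairs it. The assignment constraints $\mathbf{C_2},\mathbf{C_3}$ alone (matrices $\mathbf{C},\mathbf{D}$) form, after grouping columns by sub-band, a block-diagonal family of bipartite assignment incidence matrices, which are TUM. However, the fairness constraint $\mathbf{C_4}$ (matrix $\mathbf{E}$) aggregates each user's variables across \emph{all} BSs and sub-bands, and its support crosses that of the $\mathbf{D}$ rows, so the combined family is neither laminar nor amenable to a single uniform two-coloring of rows---both simple TUM certificates break. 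The pivoting on the nonzero entries of the first $\mathbf{I}_{NS\times NS}$ block of $\mathbf{C}$ is chosen precisely to exploit the identity that each $\mathbf{E}$ row equals the sum over sub-bands of the corresponding $\mathbf{C}$ rows: it localizes every aggregation row onto a single BS-copy, thereby decoupling $\mathbf{E}$ from the remaining assignment blocks and yielding the structured $\mathbf{T}^{(3)}$. Confirming that this decoupled $\mathbf{T}^{(3)}$ indeed satisfies the column-expansion/row-partition criterion above is the technical core of the proof.

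With $\mathbf{T}^{(3)}$ shown TUM, total unimodularity of $\mathbf{T}$ follows from \textbf{Definition~2}, and since $\mathbf{k}$ is integral, \textbf{Theorem~\ref{Theo_1}} applies: replacing ${a}_i\in\{0,1\}$ by ${a}_i\ge0$ produces a linear program all of whose vertex optima are integral. Hence solving the relaxation (e.g.\ by the dual-simplex method) returns an integer-valued optimizer of $\pr_3$ without loss of optimality, which is the claim of the lemma.
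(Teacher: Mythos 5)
Your overall skeleton matches the paper's: integrality of $\mathbf{k}$ is immediate, the operations of \textbf{Definition~2} reduce everything to showing $\mathbf{T}^{(3)}$ is TUM, and \textbf{Theorem~\ref{Theo_1}} then yields the integral relaxation. Your observation that the upper bound $a_i\le 1$ is implied by $\mathbf{C_3}$ together with $a_i\ge 0$ is also a valid (and worthwhile) remark. However, the core of the lemma --- actually certifying that $\mathbf{T}^{(3)}$ is TUM --- is precisely the step you defer (``the technical core of the proof''), and the route you sketch for it would not close. In your induction on square submatrices, the terminal case (every column of $\mathbf{M}$ has at least two nonzeros) is handled by asserting a two-class splitting of the rows whose \emph{signed row-sum vanishes in every column}, which would exhibit a linear dependence and force $\det\mathbf{M}=0$. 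But $\mathbf{T}^{(3)}$ has columns with exactly three nonzero entries, with values $-1,+1,+1$ coming from the blocks $\textit{\textrm{R}}_1,\textit{\textrm{R}}_2,\textit{\textrm{R}}_4$; for such a column no $\pm 1$ signing of its three rows can sum to zero (the sum of three odd terms is odd), so a submatrix containing all three nonzeros of such a column admits no vanishing signed row-sum, and indeed need not be singular. A TUM matrix is perfectly allowed to have nonsingular submatrices of this kind; what the Ghouila--Houri criterion of \textbf{Theorem~\ref{Theo_3}} requires is only $\left|\sum_{i\in F}-\sum_{i\in L}\right|\le 1$, not $=0$, and the paper's partition achieves exactly $1$ on these three-nonzero columns.

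What the paper actually does is apply \textbf{Theorem~\ref{Theo_3}} directly and constructively: for an arbitrary subset $I$ of rows it splits $I$ according to the four row groups of $\mathbf{T}^{(3)}$, sends the selected rows of $\textit{\textrm{R}}_2,\textit{\textrm{R}}_3$ to one class and those of $\textit{\textrm{R}}_4$ to the other, and assigns each selected row of $\textit{\textrm{R}}_1$ by a congruence-mod-$S$ matching with the selected rows of $\textit{\textrm{R}}_2$; it then checks the bound column by column over the four cases of $0$, $1$, $2$, or $3$ surviving nonzeros. That explicit partition and case check is the substance of the lemma, and it is missing from your proposal; your intuition about why the pivot localizes the $\mathbf{E}$ rows is correct but does not substitute for it. To repair your write-up, drop the determinant induction (or at least its final case) and construct the row partition explicitly as above.
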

\begin{proof}
    See Appendix A.
\end{proof}
According to \textbf{Theorem 1} and \textbf{Lemma 3}, the integer programming in $\pr_3$ can be efficiently solved in polynomial time using the dual-simplex method, and the joint user association and sub-band assignment matrix $\A$ can be reconstructed using vector $\mathbf{a}$.
{Table-\ref{tab:Time_Complexity_Assoc} compares the time complexity of the proposed linear programming (LP) and binary solution (Branch-and-bound method using Mosek solver \cite{mosek}).}
\begin{table*}[]
\centering
\caption{Comparison of time complexity between the proposed LP and binary solution of $\mathcal{P}_3$}
\begin{tabular}{|c|l|l|l|l|}
\hline
\textbf{Problem Size}    & $B=2,S=2,N=2$    & $B=4,S=8,N=6$    & $B=6,S=10,N=12$  & $B=15,S=20,N=30$ \\ \hline
\textbf{Proposed LP}     & 0.0090 {[}Sec{]} & 0.0109 {[}Sec{]} & 0.0129 {[}Sec{]} & 0.1268 {[}Sec{]} \\ \hline
\textbf{Binary Solution} & 0.1794 {[}Sec{]} & 0.2882 {[}Sec{]} & 0.4124 {[}Sec{]} & 2.4470 {[}Sec{]} \\ \hline
\end{tabular}
\label{tab:Time_Complexity_Assoc}
\end{table*}

\subsection{Centralized Power Allocation with Fractional Programming and Bisection Search}
Given $\{{S^*},w,\hat{w}_I,\hat{w}_E,\A\}$, $\pr_1$ must be solved for $\p$, which is a non-convex problem due to the variables in both numerator and denominator of the objective function. Solving a such problem is NP-hard and finding a globally optimum solution is computationally prohibitive. Applying the change of variables $\bar{p}_{b,s}=\sqrt{p_{b,s}}, \forall n \in \Nset$, the optimization problem can be formulated as:
\begin{eqnarray}
\label{prob4} \pr_4:\hspace*{-4mm}
&&\hspace*{-2mm}\underset{\bar{\p},\gamma_{b,s,n}}{\textrm{maximize}} \,\, \,\,\hspace*{-2mm} f(\bar{\p},\gam) \!=\!\! \Bsum \Nsum \Ssum \! \assocNBS w \log(1\!+\!\gamma_{b,s,n}) \notag\\
\mbox{s.t.}\hspace*{-4mm}
&&{\mathbf{C}}_\mathbf{1}:\Ssum \bar{p}_{b,s}^2\leq P_b^{\max}, \forall b \in \Bset, \\&&{\mathbf{C}}_{\mathbf{9}}:\gamma_{b,s,n} \! = \! \frac{\bar{p}_{b,s}^2 \h}{\Bpsum q \bar{p}_{\bp,s}^2\abss{\bar{h}_{\bp,s,n}} \! + \! \bar{p}_{b,s}^2 {\lvert{\tilde{h}_{b,s,n}}\rvert}^2 + N_0 w} \notag.
\end{eqnarray}
Note that the summations over the users and the binary variables in ${\mathbf{C}}_\mathbf{1}$ are omitted according to \textbf{Lemma~2}. 
{We first apply the Lagrangian dual transform to relocate the SINR expression within $\gam$ to the outside of the logarithm. By doing so, we are able to apply the quadratic transformation and solve the problem iteratively with closed-form expressions. Subsequently, we employ the quadratic transformation to convert the SINR into a concave quadratic function.}

In order to apply the Lagrangian dual transform, for fixed $\bar{\p}$, the Lagrangian function of $\pr_4$ with respect to  ${\mathbf{C}}_{\mathbf{9}}$ can be written as follows:
\begin{equation*}
    \hspace{-1.5cm}\mathcal{L}(\bar{\p},\boldsymbol{\gamma},\boldsymbol{\lambda}) = \!\Bsum \Nsum\Ssum \Big[\assocNBS w\log(1\!+\!\gam)\!
\end{equation*}
\begin{equation}\label{eqn:Lag1}
    -\!\lambda_{b,s,n}\!\!\left(\!\!\gam \!-\!\frac{\bar{p}^2_{b,s}\h }{\Bpsum\!\! q \bar{p}^2_{\bp,s}\hbpbar \!\!+\! \bar{p}^2_{b,s}\htilde\!\!\! +\!\! {N_0} w} \!\!\right)\Big], 
\end{equation}
where $\boldsymbol{\gamma} \in \mathbb{R}_+^{B S\times N} $ is the matrix of all $\gam$ and $\boldsymbol{\lambda} \in \mathbb{R}_+^{B S\times N}$ is the matrix of all Lagrange multipliers $\lambda_{b,s,n}$. Given $\bar{\p}$, the first-order condition is satisfied with respect to $\gam$ when $\frac{\partial \mathcal{L}(\bar{\p},\boldsymbol{\gamma},\boldsymbol{\lambda})}{\partial \gam} =0 $, which leads to
\begin{equation}\label{eqn:LagMultp1}
    \lambda_{b,s,n}^{*} \!=\!  \frac{\assocNBS w\!\left(\Bpsum q \bar{p}^2_{\bp,s}\hbpbar\!\! + \!\bar{p}^2_{b,s}\h \!\!+\!\! {N_0} w\right)}{\Bpsum\! q \bar{p}^2_{\bp,s}\hbpbar\!\! +\! \bar{p}^2_{b,s}(\h\!\!+\!\!\htilde) \!+\! {N_0} w}
\end{equation}
and
\begin{equation}\label{eqn:OptGamma}
    \gam^{*} = \frac{\bar{p}^2_{b,s}\h }{\Bpsum q \bar{p}^2_{\bp,s}\hbpbar + \bar{p}^2_{b,s}\htilde + {N_0} w}
\end{equation}
Next, by substituting $\lambda_{b,s,n}^{*}$ into the Lagrange function in \eqref{eqn:Lag1}, we have:
\begin{equation}\label{eqn:Obj-Lagrange1}
    \begin{split}
        & {f}_1(\bar{\p},\boldsymbol{\gamma})\! = \!\!\Bsum \Nsum \Ssum\! \big[ \assocNBS w \log(1\!+\!\gam) \!-\!  \assocNBS w \gam \\
        & + \frac{\assocNBS w(1+\gam)\bar{p}^2_{b,s}\h}{\Bpsum q \bar{p}^2_{\bp,s}\hbpbar + \bar{p}^2_{b,s}(\h+\htilde) + {N_0} w}\big].
    \end{split}
\end{equation}
According to \cite{FP-Part1,FP-Part2},  strong duality holds for the objective function in \eqref{prob4}. Thus, the transformed objective function ${f}_1(\bar{\p},\boldsymbol{\gamma})$ can be maximized instead of $f(\bar{\p},\boldsymbol{\gamma})$. The transformed objective function
${f}_1(\bar{\p},\boldsymbol{\gamma})$ is still a non-convex function of $\bar{\p}$ due to the sum-of-ratios. Hence, the following quadratic transformation is applied to handle the non-convexity. 
\begin{equation*}
    {f}_2(\bar{\p},\boldsymbol{\gamma},\boldsymbol{Y}) \!=\!\!\! \Bsum \Nsum \Ssum\! \big[\assocNBS w [\log(1\!+\!\gam)\!-\!\gam]
\end{equation*}
\begin{equation*}
    \hspace{-2.5cm}+2y_{b,s,n}\sqrt{\assocNBS w(1\!+\!\gam)\bar{p}_{b,s}^2 \h}
\end{equation*}
\begin{equation}\label{eqn:ObjFunc2}
    - y_{b,s,n}^2 \big(\!\!\Bpsum\!\! q \bar{p}_{\bp,s}^2 \abss{\bar{h}_{\bp,s,n}} \! + \bar{p}_{b,s}^2 (\h + \htilde) + {N}_0 w\big)\big],
\end{equation}
where $\mathbf{Y} \in \mathbb{R}_+^{B S\times N}$ is the matrix of  auxiliary variables $y_{b,s,n}$, and ${f}_2(\bar{\p},\boldsymbol{\gamma},\boldsymbol{Y})$ is a concave function of $\boldsymbol{Y}$.
\begin{remark}
The optimal solution of $\underset{\boldsymbol{Y}}{\max} {f}_2(\bar{\p},\boldsymbol{\gamma},\boldsymbol{Y})$, i.e., ${\boldsymbol{Y}}^*$, results in ${f}_1(\bar{\p},\boldsymbol{\gamma})$. The reason is that by setting the partial derivative of ${f}_2(\bar{\p},\boldsymbol{\gamma},\boldsymbol{Y})$ with respect to $\boldsymbol{Y}$, and substituting the resulting $\boldsymbol{Y}$ into \eqref{eqn:ObjFunc2}, the objective function ${f}_1(\bar{\p},\boldsymbol{\gamma})$ can be obtained.
\end{remark}
\noindent Based on  \textit{Remark 1}, for fixed $\bar{\p}$ and $\boldsymbol{\gamma}$,  $\boldsymbol{Y}^*$ can be found by setting $\frac{\partial{f}_2(\bar{\p},\boldsymbol{\gamma},\boldsymbol{Y})}{\partial y_{b,s,n}}=0$, i.e.,
\begin{equation}\label{eqn:Optimal-Y}
        y_{b,s,n}^{*} =  \frac{\sqrt{\assocNBS w(1+\gam)\bar{p}_{b,s}^2 \h}}{\Bpsum\!\! q \bar{p}_{\bp,s}^2 \abss{\bar{h}_{\bp,s,n}}\!\! +\bar{p}_{b,s}^2 (\h\!\! +\! \htilde)\! +\! {N}_0  w}.
\end{equation}
Then, to obtain optimal power allocation given $\boldsymbol{\gamma},\boldsymbol{Y}$, we rewrite $\pr_4$ as:
\begin{eqnarray}
\label{prob5} \pr_5:\hspace*{-4mm}
&&\hspace*{2mm}\underset{\bar{\p}}{\textrm{maximize}} \,\, \,\, {f}_2(\bar{\p},\boldsymbol{\gamma},\boldsymbol{Y})\\
\mbox{s.t.}\hspace*{-4mm}
&&\mathbf{C_1}:\Ssum \bar{p}_{b,s}^2\leq P_b^{\max}, \forall b \in \Bset\notag.
\end{eqnarray}
The optimization problem $\pr_{5}$ is a convex problem and a closed-form expression for $\bar{p}_{b,s}$ can be found. By introducing Lagrange multiplier $\mu_{b}$ for $\mathbf{C_1}$, the Lagrange function is given by
\begin{equation}\label{eqn:Lagrange-Power}
    \begin{split}
       {\mathcal{L}_2}(\bar{\p},\boldsymbol{\mu}) =  {f}_2(\bar{\p},\boldsymbol{\gamma},\boldsymbol{Y}) - \Bsum \mu_b(\Ssum \bar{p}_{b,s}^2 - P_b^{\max}),
    \end{split}
\end{equation}
where $\boldsymbol{\mu}=\{\mu_1,\dots,\mu_B\}$.
The first-order optimality condition yields $\frac{\partial {\mathcal{L}_2}(\bar{\p},\boldsymbol{\mu})}{\partial \bar{p}_{b,s}}=0$, so we have
\begin{equation*}
   \hspace{-7.5cm} {\bar{p}_{b,s}}^{*} =
\end{equation*}
\vspace*{-0.7cm}
\begin{equation}\label{eqn:Power-Opt}
     \frac{\Nsum y_{b,s,n}\sqrt{\assocNBS w(1+\gam)\h}}{\Bpsum \Nsum \!\!q y^2_{\bp,s,n}\hbpbar\!\!\! +\!\!\!\! \Nsum y^2_{b,s,n}(\h \!\!\!+ \!\htilde)\!+\!\mu_b}.
\end{equation}
The Lagrange multiplier $\boldsymbol{\mu}$ satisfies the complementary slackness for the constraint $\mathbf{C_1}$, which ensures $\mu_b$ is either zero or a positive value that ensures the power budget constraint. It can be seen that ${\bar{p}_{b,s}}^{*}$ is a decreasing function of $\mu_b$. Thus, the values of $\boldsymbol{\mu}$ can be found by solving $J(\mu_b) = \Ssum {({\bar{p}_{b,s}(\mu_b)})}^2 - P_b^{\max} = 0$ using bisection search.
The overall procedure to obtain optimal power allocation  is summarized in \textbf{Algorithm-1}.

\begin{algorithm}[b!]
\small
    \caption{Centralized power allocation via fractional programming}
    \label{Algorithm-1}
    \textbf{Input}: Initializing $\bar{\p}$, $\boldsymbol{\gamma}$ with feasible values, maximum number of iterations $L_{\max}$, stopping accuracy $\epsilon_1$, bisection search parameters.
    \begin{algorithmic}[1]
        \For {$t=0,1,...$}
        \State Update $\boldsymbol{Y}^{(t+1)}$ using \eqref{eqn:Optimal-Y}
        \For {$\forall b \in \Bset$}
            \If{$J(\mu_b=0) \leq 0$}
                \State Update $\bar{p}_{b,s}^{(t+1)}$ using \eqref{eqn:Power-Opt} when $\mu_b = 0$
                \Else 
                \State Find $\mu_b^{*}$ using bisection search and update $\bar{p}_{b,s}^{(t+1)}$ using \eqref{eqn:Power-Opt}
                \EndIf
        \EndFor
        \State Update $\boldsymbol{\gamma}^{(t+1)}$ using \eqref{eqn:OptGamma}
        \State \textbf{Until} $|f(\bar{\p},\boldsymbol{Y},\boldsymbol{\gamma})^{(t+1)} - f(\bar{\p},\boldsymbol{Y},\boldsymbol{\gamma})^{(t)}| < \epsilon_1$ or $t=L_{\max}$
        \EndFor
    \end{algorithmic}
    \textbf{Output}: The optimal power allocation: $\p^{*} = {(\bar{\p}^{(t+1)})}^2$.
\end{algorithm}

\section{Low-Complexity Distributed Solution and Complexity Analysis}
In this section, we provide a light-weight distributed power control solutions for time-sensitive wireless applications and analyze the worst-case complexity of the proposed algorithms.

\subsection{Low-Complexity Distributed Solution}
The optimization problem $\pr_{5}$ is a quadratically constrained quadratic program (QCQP) that can be efficiently solved using the ADMM method in a distributed manner\cite{ADMM-ML}. We first reformulate  $\pr_{5}$ by introducing first and second blocks of primal variables. We then proceed to employ the scaled ADMM approach to achieve power allocation in a distributed manner with reduced complexity.  
By introducing the auxiliary variables $\boldsymbol{\Delta}=[\boldsymbol{\delta}_{1},\dots,\boldsymbol{\delta}_B]^T \in \mathbb{R}_+^{B\times S}$ and $\boldsymbol{\delta_b}=[\delta_{b,1},\dots,\delta_{b,S}]^T$, such that $\delta_{b,s}$ is equivalent to $\bar{p}_{b,s}$, problem $\pr_5$ is given as:
\begin{eqnarray}
\label{prob6} \pr_6:\hspace*{-4mm}
&&\hspace*{2mm}\underset{\bar{\p},\boldsymbol{\Delta}}{\textrm{minimize}} \,\, \,\, -{f}_2(\bar{\p},\boldsymbol{\gamma},\boldsymbol{Y}) + \mathbbm{1}_{\mathbf{C_1}}(\boldsymbol{\Delta})\\
\mbox{s.t.}\hspace*{-4mm}
&&\mathbf{C_{10}}:\boldsymbol{\Delta}=\bar{\p}\notag,
\end{eqnarray}
In order to take the constraint $\mathbf{C_1}$ into account, for fixed $\boldsymbol{Y}$, and $\boldsymbol{\gamma}$, we define the following indicator function, which is used in the objective function of $\pr_6$.
\begin{equation}\label{eqn:p5ConstraintSet}
\mathbbm{1}_{\mathbf{C_1}}(\boldsymbol{\Delta})= 
\left\{\begin{array}{ll}
 0 &   \Ssum \delta_{b,s}^2 \leq P_b^{\max}, \forall b \in \Bset\\
 \infty & \text { o/w }.
\end{array}\right.
\end{equation}
According to the scaled ADMM method, we consider $\bar{\p}$ and $\boldsymbol{\Delta}$ as the first and second blocks of primal variables \cite{ADMM-ML}. Then, we introduce $\boldsymbol{Z}=[\boldsymbol{z}_1,\dots,\boldsymbol{z}_B]^T \in \mathbb{R}_+^{B\times S}$, and $\boldsymbol{z_b}=[z_{b,1},\dots,z_{b,S}]^T$ as the dual variables, accounting for the equality constraint in $\mathbf{C_{10}}$. The augmented Lagrangian function for optimization problem $\pr_6$ is obtained as follows:
\begin{equation*}
\hspace{-6cm}\mathcal{L}_D(\bar{\p},\boldsymbol{\Delta},\boldsymbol{Z}) =
\end{equation*}
\vspace{-0.4cm}
\begin{equation}\label{eqn:P6Lagrang}
 -{f}_2(\bar{\p},\boldsymbol{\gamma},\boldsymbol{Y}) + \mathbbm{1}_{\mathbf{C_1}}(\boldsymbol{\Delta}) + \frac{\rho}{2}\Bsum {\lVert \boldsymbol{\delta}_b - \bar{\boldsymbol{p}}_b + \boldsymbol{z}_b \rVert}^2_2,
\end{equation}
where $\bar{\boldsymbol{p}}_b = [\bar{p}_{b,1},\dots,\bar{p}_{b,S}]^T$, and $\rho > 0$ is the penalty factor of the augmented Lagrangian. The augmented Lagrangian maximization and multiplier update can be conducted iteratively through the ADMM approach. 
In what follows, we first explain how to obtain $\bar{\p}$, $\boldsymbol{\Delta}$, and $\boldsymbol{Z}$, and then describe how the process is distributed over all BSs.

\subsubsection{Obtaining the first block of primal variables, $\bar{\p}$}
The power allocation matrix, $\bar{\p}$ is derived by solving the following optimization problem:
\begin{equation*}
    \hspace{-4cm}\bar{\p} \leftarrow \arg \underset{\bar{\p}}{\min} \ \mathcal{L}_D(\bar{\p},\boldsymbol{\Delta},\boldsymbol{Z}) =
\end{equation*}
\begin{equation}\label{eqn:PowerADMM-Opt}
     \arg \underset{\bar{\p}}{\min} -{f}_2(\bar{\p},\boldsymbol{\gamma},\boldsymbol{Y}) + \frac{\rho}{2}\Bsum {\lVert \boldsymbol{\delta}_b - \bar{\boldsymbol{p}}_b + \boldsymbol{z}_b \rVert}^2_2.
\end{equation}
The closed-form solution to the problem in \eqref{eqn:PowerADMM-Opt} is derived by setting $\frac{\partial \mathcal{L}_D(\bar{\p},\boldsymbol{\Delta},\boldsymbol{Z})}{\partial \bar{p}_{b,s}} = 0$.
\begin{equation*}
   \hspace{-7.5cm} \bar{p}_{b,s} =
\end{equation*}
\vspace{-0.6cm}
\begin{equation}\label{eqn:PowerADMM}
     \frac{\Nsum y_{b,s,n}\sqrt{\assocNBS w(1+\gam)\h}\!+\frac{\rho}{2}(\delta_{b,s}+z_{b,s})}{\Bpsum \Nsum \!\!q y^2_{\bp,s,n}\hbpbar\!\!\! +\!\!\! \Nsum y^2_{b,s,n}(\h\!\!\! +\!\! \htilde)\!+\!\frac{\rho}{2}}.
\end{equation}

\subsubsection{Obtaining the second block of primal variables, $\boldsymbol{\Delta}$}
In order to obtain the optimal second block of primal variables, i.e., $\boldsymbol{\Delta}$, we need to solve the following optimization problem:
\begin{equation*}
    \hspace{-4cm}\boldsymbol{\Delta} \leftarrow \arg \underset{\boldsymbol{\Delta}}{\min} \ \mathcal{L}_D(\bar{\p},\boldsymbol{\Delta},\boldsymbol{Z}) =
\end{equation*}
\begin{equation}\label{eqn:DeltaADMM-Opt}
     \arg \underset{\boldsymbol{\Delta}}{\min} \ \mathbbm{1}_{\mathbf{C_1}}(\boldsymbol{\Delta}) + \frac{\rho}{2}\Bsum {\lVert \boldsymbol{\delta}_b - \bar{\boldsymbol{p}}_b + \boldsymbol{z}_b \rVert}^2_2.
\end{equation}
The optimization problem in \eqref{eqn:DeltaADMM-Opt} can be decomposed into $B$ sub-problems that can be solved in parallel. The $b$-th sub-problem is given as:
\begin{eqnarray}
\label{prob7} \pr_7:\hspace*{-4mm}
&&\hspace*{2mm}\underset{\boldsymbol{\delta}_b}{\textrm{minimize}} \,\, \,\, -\frac{\rho}{2}\Ssum {(\delta_{b,s}-\bar{p}_{b,s}+z_{b,s})}^2\\
\mbox{s.t.}\hspace*{-4mm}
&&\mathbf{C_{13}}:\Ssum \delta_{b,s}^2 \leq P_{b}^{\max}\notag,
\end{eqnarray}
Each sub-problem in $\pr_7$ is a projection onto $l2$ ball (a convex problem) that can be solved by utilizing stationarity and complementary slackness conditions \cite{boyd2004convex}. Therefore, the optimal solution of the second block of primal variables is obtained as:
\begin{equation}\label{eqn:OptDelta}
    \boldsymbol{\delta}_b =\boldsymbol{\xi}_b \times \min\left\{\frac{\sqrt{P_b^{\max}}}{{\lVert \boldsymbol{\xi}_b \rVert}_2},1\right\}, \ \forall b \in \Bset,
\end{equation}
where $\boldsymbol{\xi}_b = [\xi_{b,1},\dots,\xi_{b,S}]^T= [\bar{p}_{b,1}-z_{b,1},\dots,\bar{p}_{b,S}-z_{b,S}]^T$. Similarly, the dual variable $\boldsymbol{Z}$ can be obtained by $\bar{\p}-\boldsymbol{\Delta}$.
In \textbf{Algorithm-2}, the overall process of power allocation with ADMM combined with FP is summarized. First, the central processor unit (CPU) computes the auxiliary variables  $\boldsymbol{Y}$, $\boldsymbol{\gamma}$ and then the first block of primal variables, i.e., $\bar{\p}$. Then, the CPU transmits $\bar{\p}$ to all BSs, and each BS obtains the optimal value of the second bock of primal variables, i.e., $\boldsymbol{\Delta}$ and sends them back to the CPU. This process continues until convergence or reaches the maximum number of iterations. It is noteworthy that there is a trade-off between the number of iterations in the ADMM, which is the inner algorithm in \textbf{Algorithm-2}, and the optimality of the solution. Therefore, for faster convergence, we need to adjust the penalty factor $\rho$. In \textbf{Algorithm-2}, BSs are not required to possess the CSI of every user associated with other BSs. They need only possess a copy of $\boldsymbol{\xi}_b$. If we wish to implement \textbf{Algorithm-1} in the same distributed manner, each BS must have the CSI of all users, thereby increasing the communication overhead. In addition, \textbf{Algorithm-1} is computationally more expensive when obtaining $\bar{\p}$ at each BS.

\begin{algorithm}[b!]
\small
    \caption{ADMM-based distributed power allocation}
    \label{Algorithm-2}
    \textbf{Input}: Initializing $\bar{\p}$, $\boldsymbol{\gamma}$ with feasible values, maximum number of iterations $L_{\max}$ for FP, and $L_{\max}^A$ for ADMM, stopping accuracy $\epsilon_1$ for FP and $\epsilon_A$ for ADMM, and $\rho$.
    
    \begin{algorithmic}[1]
        \For {$t=0,1,...$}
            \State Update $\boldsymbol{Y}^{(t+1)}$ using \eqref{eqn:Optimal-Y}
            \State Initialize $\boldsymbol{\Delta},\boldsymbol{Z}$
            \For {$k=0,1,\dots$}
                \State Update $\bar{\p}^{(k+1)}$ using \eqref{eqn:PowerADMM} and send it to all BSs
                \State Each BS updates $\boldsymbol{\delta}_b$ using \eqref{eqn:OptDelta} and sends it back to the CPU
                \State Update $\boldsymbol{Z}^{(k+1)}$ using $\boldsymbol{Z}^{(k+1)} \leftarrow \boldsymbol{Z}^{(k)} + \bar{\p}^{(k)}-\boldsymbol{\Delta}^{(t)}$
                \State \textbf{Until}
                $\frac{{\lVert \bar{\p} -\boldsymbol{\Delta} \rVert}_2}{{\lVert \boldsymbol{\Delta} \rVert}_2} < \epsilon_A$ or $k = L_{\max}^A$
                \EndFor
            \State Update $\boldsymbol{\gamma}^{(t+1)}$ with $\boldsymbol{\Delta}^{(k+1)}$ using \eqref{eqn:OptGamma}
        \State \textbf{Until} $|f(\bar{\p},\boldsymbol{Y},\boldsymbol{\gamma})^{(t+1)} - f(\bar{\p},\boldsymbol{Y},\boldsymbol{\gamma})^{(t)}| < \epsilon_1$ or $t=L_{\max}$
        \EndFor
    \end{algorithmic}
    \textbf{Output}: The optimal power allocation: $\p^{*} = {(\bar{\p}^{(t+1)})}^2 = {(\boldsymbol{\Delta}^{(t+1)})}^2$.
\end{algorithm}
\begin{algorithm}[b!]
\small
    \caption{Overall two-stage algorithm for solving $\pr_1$}
    \label{Algorithm-3}
    \textbf{Input:} Upper bound for fractional bandwidth, i.e., $B_{\mathrm{th}}$, start and end frequencies of the TW of interest, $f_I$ and $f_E$, tolerance value of the approximated window, $\epsilon$.
    \\
    \textbf{Stage 1:} 
    \begin{algorithmic}[1]
        \State Obtain the feasible values of $\hat{w}_I$ and $\hat{w}_E$ using interior-point method.
        \State Obtain $S_{\mathrm{LB}}$ using \eqref{eqn:LowerBound-S}
    \end{algorithmic}
    \textbf{Stage 2:}
    \begin{algorithmic}[1]
        \State Initialize a feasible $\A^{(0)}$ heuristically.
        Using $\A^{(0)}$, initialize $\p$ with equal power allocation and a scale factor between 0 and 1.
        \For {$i=0,1,...$}
            \State Solve $\pr_{3}$ using the dual-simplex method and obtain $\A^{(i)}$.
            \State Solve power allocation sub-problem using either \textbf{Algorithm-1} or \textbf{Algorithm-2} and obtain $\p^{(i)}$
            \State \textbf{Until} $|{f(\p,\A)}^{(i+1)} - {f(\p,\A)}^{(i)}| < \epsilon_3$
        \EndFor
    \end{algorithmic}
    \textbf{Output}: The optimal power allocation, $\p^{*}$, Joint user association and sub-band assignment, $\A^*$, Number of sub-bands $\hat{S}$, and starting and end edge bands, $\hat{w}_I, \hat{w}_E$.
\end{algorithm}
\subsection{Complexity Analysis of the Proposed Algorithms}
In what follows, we provide a worst-case complexity analysis of the proposed algorithms. 
\subsubsection{\textbf{Algorithm-1}} The complexity of {Algorithm~1} is calculated as follows. Step-2 and Step-10 of {\textbf{Algorithm-1}}, each has the complexity of $\mathcal{O}(BSN)$. The bisection search in Step-7 requires $\mathcal{O}(B\log(\frac{\mu_{\max}}{\epsilon_b}))$, where $\mu_{\max}$ is the upper value of the Lagrange multiplier, $\mu_b$, and $\epsilon_b$ is the error tolerance of bisection search. Step-7 also has $\mathcal{O}(BS)$ for calculating the power allocation. Therefore, the overall complexity of \textbf{Algorithm-1} is $\mathcal{O}(I_P(2BSN+BS+B\log(\frac{\mu_{\max}}{\epsilon_b})))$, such that $I_P$ is the number of iterations \textbf{Algorithm-1} requires to converge based on $\epsilon_1$.
\subsubsection{\textbf{Algorithm-2}} For the inner algorithm (i.e., ADMM), Step-5 and Step-6 require the complexity of $\mathcal{O}(BS)$ together. Similar to \textbf{Algorithm-1}, the complexity of Step-2 and Step-10 is $\mathcal{O}(BSN)$. Hence, the overall complexity of Algorithm-2 is obtained as $\mathcal{O}(I_{FP}(2BSN+I_{AD}2BS))$, where $I_{FP}$ and $I_{AD}$ is the number of iterations required for convergence in the outer and inner ADMM algorithm, respectively. Based on our experiments, we observed that by selecting  proper penalty factor for ADMM, i.e., $\rho$, the ADMM algorithm can converge in a few iterations. Therefore, the overall complexity of \textbf{Algorithm-2} is less than that of Algorithm~1. 

\subsubsection{\textbf{Algorithm-3}}
The complexity of Stage 1 in \textbf{Algorithm-3} is $\mathcal{O}(8I_I)$, such that $I_I$ is the number of iterations for the interior-point method. In Stage 2, the order of complexity in the joint user association and sub-band assignment when solving LP in $\pr_3$ is $\mathcal{O}({(BSN)}^3)$. Solving $\pr_3$ without using \textbf{Theorem 1} results in the complexity of $\mathcal{O}(2^{BSN})$. As a result, the overall complexity of solving $\pr_1$ with the solution presented in \textbf{Algorithm-3} is $\mathcal{O}(8I_I + I_M({(BSN)}^3 + I_P(2BSN+BS+B\log(\frac{\mu_{\max}}{\epsilon_b}))))$ if \textbf{Algorithm-1} is used for power allocation and is $\mathcal{O}(8I_I + I_M( {(BSN)}^3 + I_{FP}(2BSN+I_{AD}2BS)))$ if \textbf{Algorithm-2} is utilized for solving $\pr_4$, where $I_M$ is the number of iterations required for \textbf{Algorithm-3} to converge. { The proof of convergence of the alternating optimization employed in the second stage of \textbf{Algorithm-3} to a sub-optimal point is provided in Appendix B.}

\section{Numerical Results and Discussions}
\begin{figure}[t]
    \centering
\includegraphics[scale=0.5]{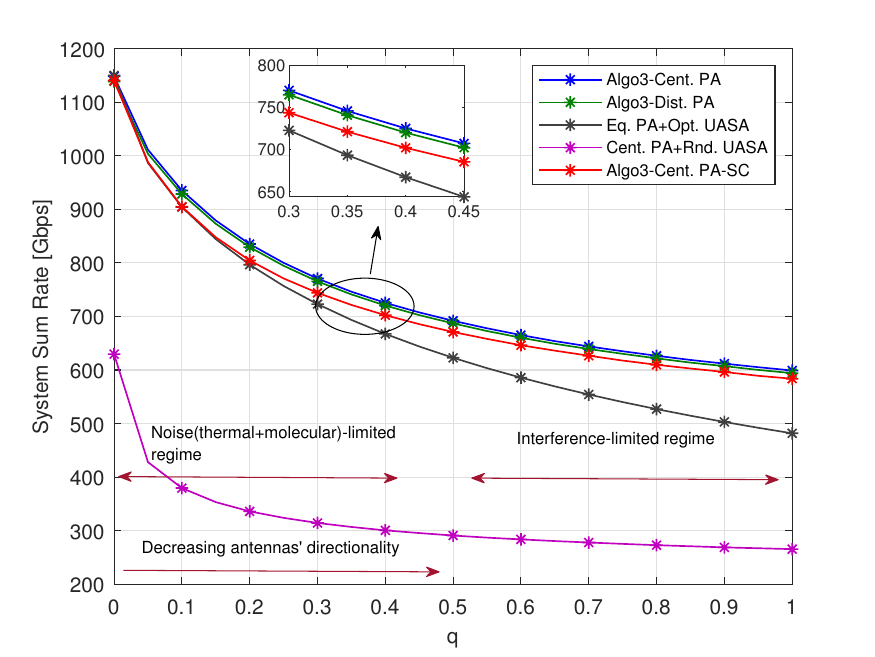}
    \vfill
{\includegraphics[scale=0.5]{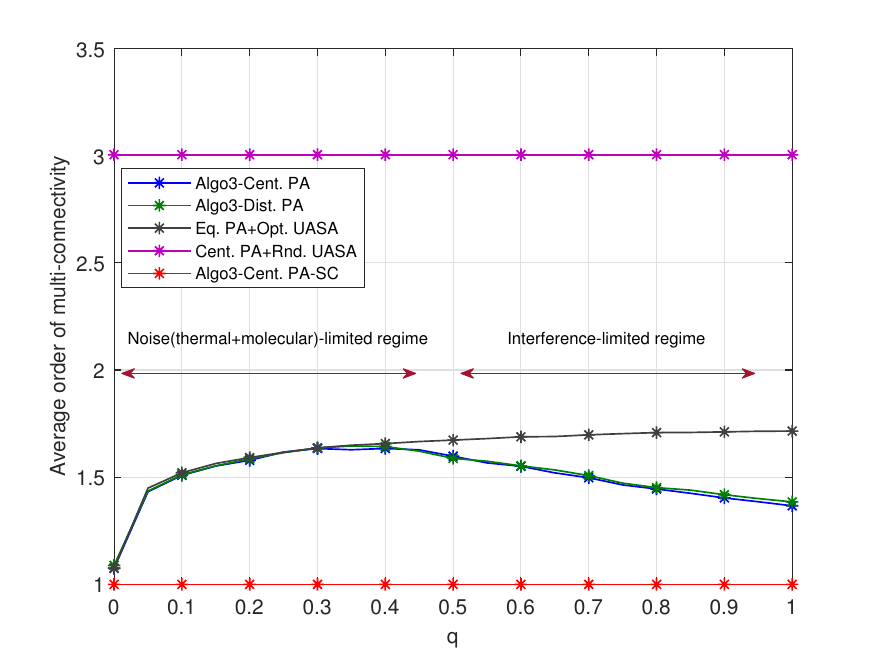}
    \label{fig:q_Changes_MultiConnOrder}}
    \caption{Varying antennas' directionality on (a) network sum-rate (b) average multi-connectivity order. $\Gamma^L_n=1$, $B=6$, $N=12$.}
    \label{fig:q_changes}
    \vspace{-5mm}
\end{figure}
This section focuses on evaluating the performance of the proposed algorithms and presenting their convergence behaviors. The results extract useful insights related to the impact of the number of BSs, blockages, molecular absorption, antennas' directionality, {hardware impairment, multi-antenna BSs, and imperfect CSI} on the performance. 
\subsubsection{Simulation Set-up and Parameters}
Unless otherwise stated, 
{we consider that the users and BSs are randomly distributed across a circular region} with a radius of 30 [m], the maximum transmit power is set to be $P_b^{\max}= 1$ Watts, the shape parameter of Nakagami-$m$ fading is $m=20$, fractional bandwidth is $B_{\mathrm{th}}=1\%$, array gains are $G_{\max}^{\text{Tx}}=G_{\max}^{\text{Rx}}=25$ [dB]. {The coverage of each BS is determined by transmit power and antenna gains.} The blockage density is $\eta = 0.005 \ [{\textrm{m}}^{-1}]$, the penalty factor of ADMM is $\rho=2.2$, and the accuracy of all three algorithms is set to be $\{\epsilon_1,\epsilon_2,\epsilon_3\} = 10^{-3}$. 
{In this section, the considered gas molecules used to calculate the molecular absorption coefficient and their respective ratios are as follows:
    N2 (Nitrogen): 76.545\% - O2 (Oxygen): 20.946 \% - H2O (Water): 1.57\% - CO2 (Carbon dioxide): 0.033 \% - CH4 (Methane): 0.906\%}

In the figures of this section, \textit{`Algo3 - Cent. PA'} and \textit{`Algo3 - Dist. PA'} represent the use of Algorithm-3 with \textbf{Algorithm-1} and \textbf{Algorithm-2}, respectively. \textit{`Algo3 - Cent. PA - SC'} stands for the case of using single connectivity, in which we constrain the users to associate with only one BS.
\textit{`Eq. PA + Opt. UASA'} denotes the use of equal power allocation and the proposed joint user association and sub-band assignment, and \textit{`Cent. PA + Rnd. UASA'} shows the use of proposed power allocation with \textbf{Algorithm-1} along with a feasible random user association and sub-band assignment. Moreover, \textit{`Average order of multi-connectivity'} (AOM) is obtained by counting the number of BSs to which each user is associated and then averaging out over all users.

\begin{figure*}[t!]
\begin{minipage}{0.6\textwidth}
{\includegraphics[scale=0.38]{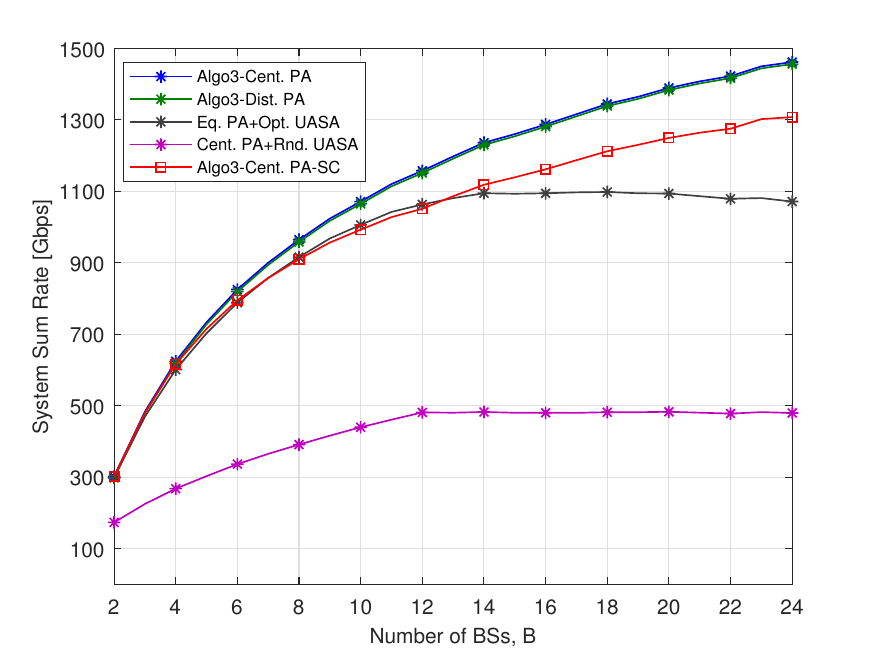} 
}
{\includegraphics[scale=0.38]{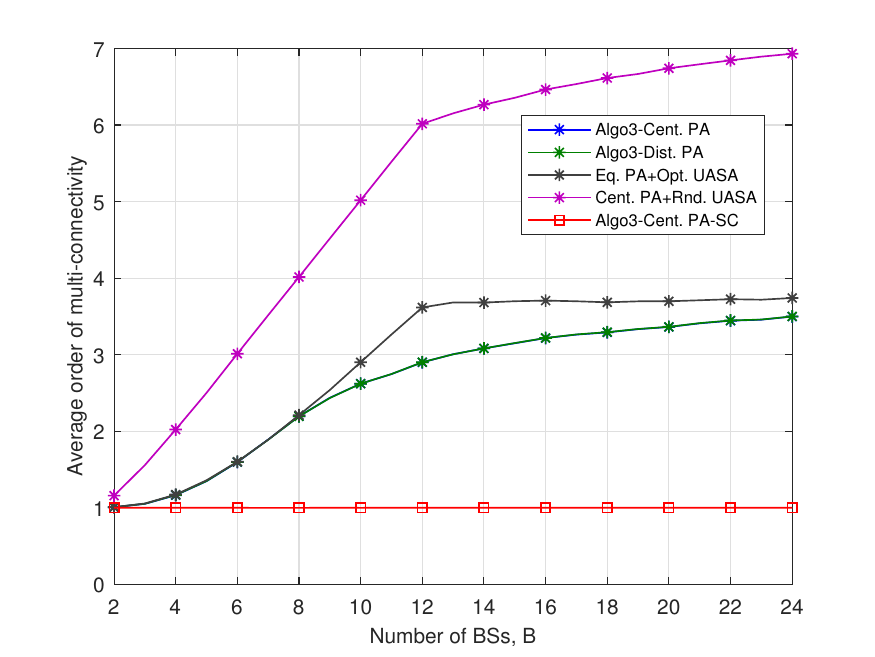}
    \label{fig:NumBSs_Varies}}
     \caption{(a) System sum-rate (b) AOM vs number of BSs, $\Gamma^L_n=1$, $q=0.2$, $N=12$.}
      \label{fig:NumBSs_Varies}
     \end{minipage}\hfill
     \begin{minipage}{0.32\textwidth}
     \includegraphics[scale=0.38]{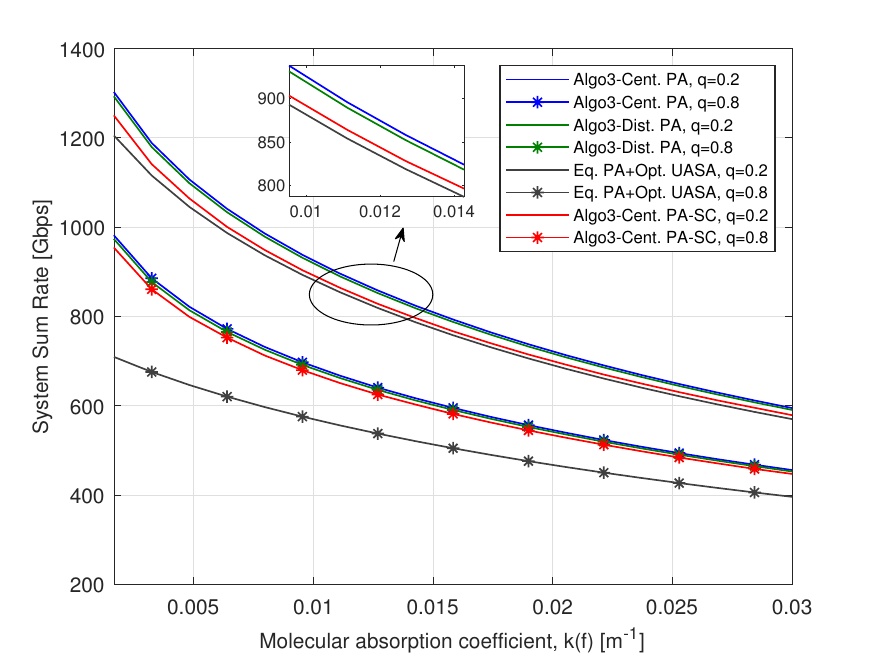}
    \caption{Network sum-rate vs molecular absorption coefficient, $\Gamma^L_n=1$, $B=6$, $N=12$.}
    \label{fig:Kf_Varies}
    \end{minipage}
\end{figure*}
\subsubsection{Impact of Antannas' Directionality}

Fig.~\ref{fig:q_changes} depicts the impact of altering the antennas' directionality $q$ on the network sum-rate and AOM. We note that the centralized and distributed versions of \textbf{Algorithm-3} outperform \textit{`Eq. PA + Opt. UASA'} and \textit{`Cent. PA + Rnd. UASA'}, especially for scenarios  when $q$ is high, i.e., low directionality resulting in high interference; thereby interference-limited regime. Subsequently, increasing $q$ (lowering antennas' directionality) degrades the performance of all methods. Moreover, it can be observed that multi-connectivity improves the system sum-rate, especially  for high directionality. Fig. \ref{fig:q_changes}b shows that in the noise-limited regime, AOM increases as $q$ increases. However, in the interference-limited regime, as opposed to \textit{`Eq. PA + Opt. UASA'} the AOM of the proposed methods decreases due to optimal allocations decrease the impact of interference on the system sum-rate.


\subsubsection{Impact of the Number of BSs}

Fig.~\ref{fig:NumBSs_Varies} illustrates the influence of increasing the number of BSs on the system performance for both proposed methods and other benchmarks. It is evident that as the number of BSs increases, both \textit{`Algo3 - Cent. PA'} and \textit{`Algo3 - Dist. PA'} outperform \textit{`Eq. PA - Opt. UASA'} and\textit{`Cent. PA + Rnd. UASA'}. Also, it is evident that the proposed single-connectivity case shows an inferior performance compared to the proposed methods with multi-connectivity, and the gap increases with the number of BSs.
The improvement in the sum-rate achieved by random user association and sub-band assignment reaches a saturation point, highlighting the superior performance of the proposed low-complexity joint user association and sub-band assignment. Moreover, the gain exhibited by the proposed methods over \textit{`Eq. PA - Opt. UASA'}, which benefits from the proposed joint user association and sub-band assignment, increases with the number of BSs. This improvement is attributed to the optimal power allocation that compensates for increased interference levels while effectively leveraging multi-connectivity. Furthermore, from Fig.~\ref{fig:NumBSs_Varies}b, we can observe that the AOM of the proposed methods increases with $B$, which reflects the impact of leveraging multi-connectivity in such as system. Although the increment of AOM is seen for \textit{`Eq. PA - Opt. UASA'}, due to the lack of optimal power allocation along with the user association, the increase in AOM cannot compensate for the increased level of interference as the total number of BSs rises.

\subsubsection{Effect of Molecular Absorption Coefficient}

{Varying the molecular absorption coefficient directly, for a given carrier frequency, can reflect different environmental factors, such as pressure, temperature, altitude, and so on \cite{MBN-Survey}.} Fig.~\ref{fig:Kf_Varies} highlights that the system sum-rate decreases when the molecular absorption coefficient $k(f)$ increases. The proposed methods, however, outperform the \textit{`Eq. PA - Opt. UASA'}, especially for smaller values of $k(f)$ and reduced antennas' directionality, i.e., $q=0.8$, as they can optimally reduce the interference for a given user. Moreover, raising $k(f)$ reduces the differences between the proposed methods and \textit{`Eq. PA - Opt. UASA'}, even for $q=0.8$. The reason is that since $k(f)$ impacts the channels exponentially, the channel gain of the signal power in the numerator of the SINR for each user decreases as the molecular absorption coefficient increases. It also increases the impact of molecular absorption noise. Therefore, although still outperforming \textit{`Eq. PA - Opt. UASA'}, the proposed algorithms' performance is impeded by channel gains when having an increased molecular absorption coefficient. 

\subsubsection{Blockage Density}
\begin{figure}[h]
    \centering
{\includegraphics[scale=0.5]{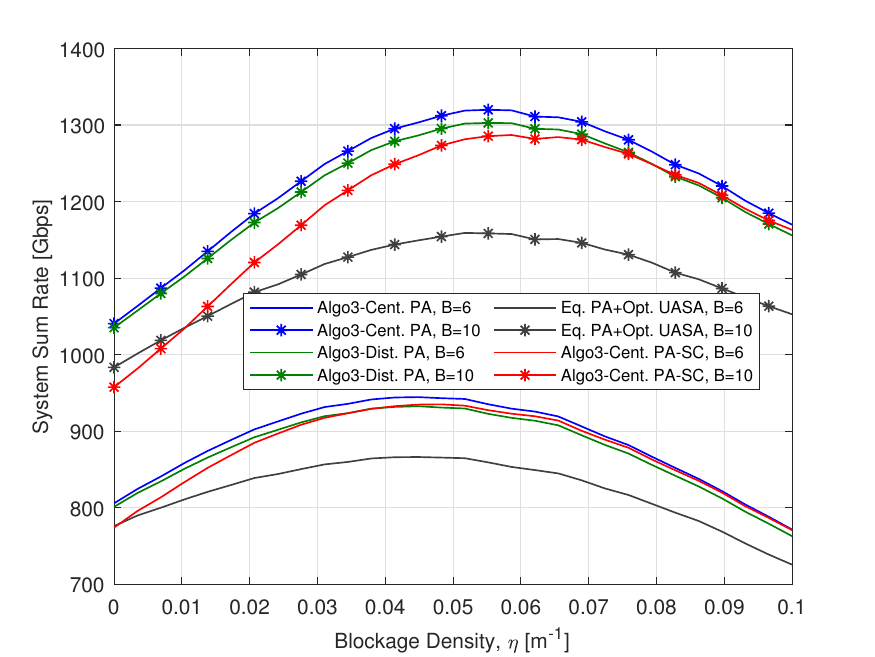}
    \label{fig:Blockage_Varies_SumRate}}
    \vfill
{\includegraphics[scale=0.5]{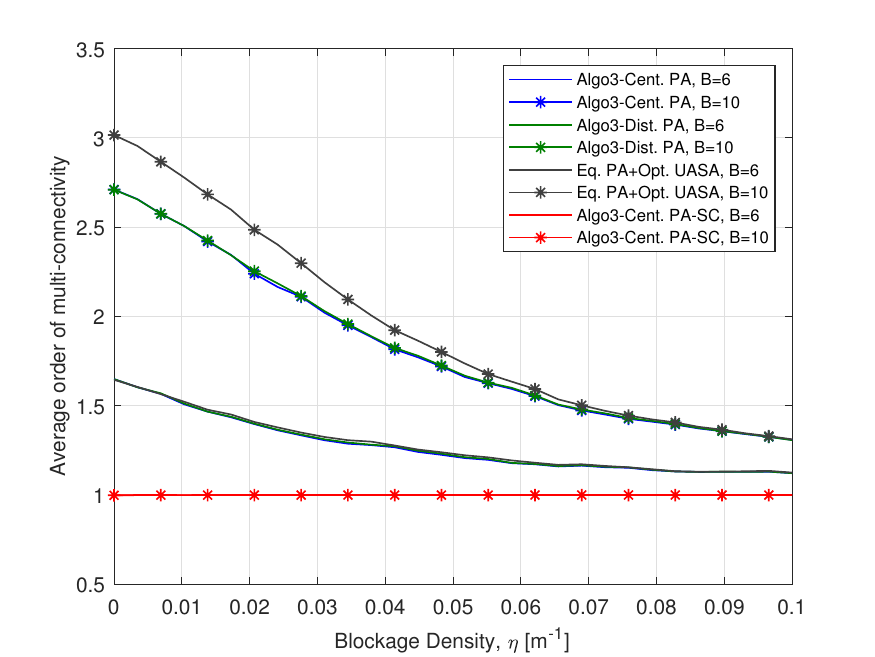}
    \label{fig:Blockage_Varies_MultiConnOrder}}
    \caption{ (a) System sum-rate and (b) average multi-connectivity order vs blockage density, $\Gamma^L_n=1$, $q=0.2$, $N=12$.}
    \label{fig:Blockage_Varies}
\end{figure}

\begin{figure*}[t!]
\begin{minipage}{0.48\textwidth}
\centering
    \includegraphics[scale=0.5]{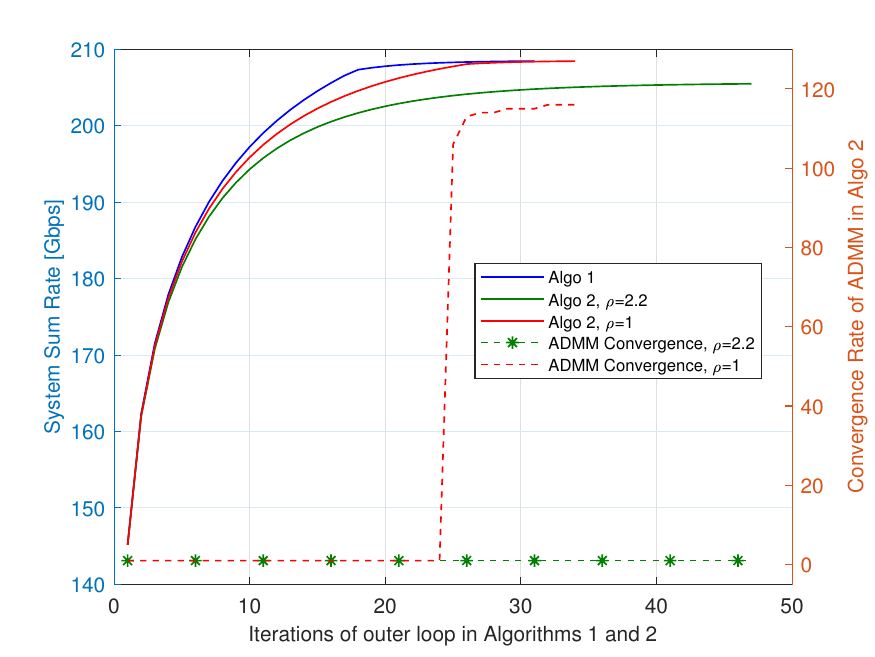}
    \caption{Convergence of the proposed algorithms for power allocation, and the impact of varying penalty factor in Algorithm-2.}
    \label{fig:Convergence_PA}
\end{minipage}
\hfill
\begin{minipage}{0.48\textwidth}
\centering
    \includegraphics[scale=0.5]{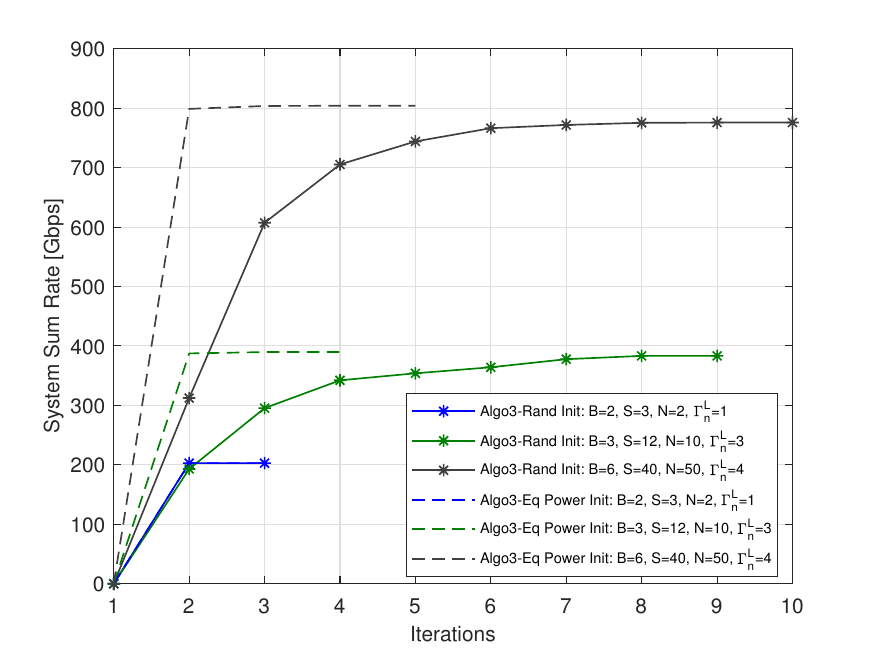}
    \caption{Convergence of Algorithm-3 for different setups, and minimum  number of sub-band per user, i.e., $\Gamma^L_n$, and different initializations}
\label{fig:Convergence_MainAlgo}
\end{minipage}
    \vspace{-5mm}
\end{figure*}

Fig.~\ref{fig:Blockage_Varies} illustrates the impact of increasing blockage density ($\eta$) on the system sum-rate for different numbers of BSs. 
Observing Fig.~\ref{fig:Blockage_Varies}a, it becomes evident that as the blockage density increases, both proposed methods  outperform \textit{`Eq. PA - Opt. UASA'}. For lower values of $\eta$, the gap between multi-connectivity and single-connectivity widens. This is due to the availability of more unblocked links for connecting to multiple BSs. Furthermore, it is noticeable that employing a greater number of BSs enhances performance and increases the performance gap between the proposed methods and other benchmarks.

When blockage density increases, the proposed power allocation and user association strategies aim to maximize the system sum-rate. Consequently, more blocked links are observed at the interference of a given user. Therefore, the system sum-rate improves as blockage density rises. However, after reaching a certain blockage density, the overall data rate starts to decline. This decline is a consequence of an increase in $\eta$, which exponentially raises the likelihood of blocking the optimal link for a given user. Fig.~\ref{fig:Blockage_Varies}b illustrates that the AOM for all cases decreases as $\eta$ increases, i.e., in strong blockage environments, the proposed solution discourages multi-connectivity. Nonetheless, employing a greater number of BSs encourages the AOM level.

\subsubsection{Convergence Analysis}

Fig.~\ref{fig:Convergence_PA} illustrates the convergence of Algorithms 1 and 2, which are represented in the figure by \textit{`Algo 1'} and \textit{`Algo 2'}, respectively. 
It is evident that the proposed algorithms for power allocation converge in a few iterations. Also, the convergence of \textbf{Algorithm-2} is depicted for different penalty factors of the ADMM algorithm, i.e., $\rho$ to reflect the impact of the ADMM algorithm convergence on the overall \textbf{Algorithm-2}. The number of iterations at which the inner ADMM algorithm converges is displayed on the right y-axis. It can be seen that by reducing $\rho$, \textbf{Algorithm-2} converges faster and approaches \textbf{Algorithm-1} at the expense of having a greater convergence rate for the inner ADMM algorithm. Using $\rho =2.2$ for the penalty factor lets the inner algorithm converge in only one iteration, which can significantly reduce the communication overhead in the system. It can be deduced that there is a trade-off between the optimality of the solution obtained by \textbf{Algorithm-2} and the overall complexity  when \textbf{Algorithm-2} is utilized in a distributed manner.

Fig.~\ref{fig:Convergence_MainAlgo} shows the convergence of the overall proposed \textbf{Algorithm-3} for various system configurations, user association and sub-band assignment initialization, and the minimum required number of sub-bands for users, i.e., $\Gamma^L_n$. In this figure, \textit{`Algo 3-Rand Init'} stands for the case where the initialization for the user association and sub-band assignment, i.e., $\A$ is a random and feasible point. \textit{`Algo 3-Eq Power Init'} is the case when the initial $\A$ is obtained by calculating the rates using equal power allocation. It is evident that increasing the number of users, BSs, and sub-bands increases the convergence rate of Algorithm-3. The reason is that the total number of optimization variables is increases, leading to expanding the feasible set and search space of the optimization problem that can take more time to converge. Additionally, as shown in the previous section, the computational complexity of \textbf{Algorithm-3} increases as a function of optimization variables, which in turn, requires the algorithm a greater number of iterations to converge. It can also be seen that with random initialization for user association and sub-band assignment, Algorithm-3 can achieve a comparable performance compared to the case with equal power allocation-based initialization.

\begin{figure}[h]
    \centering
{\includegraphics[scale=0.5]{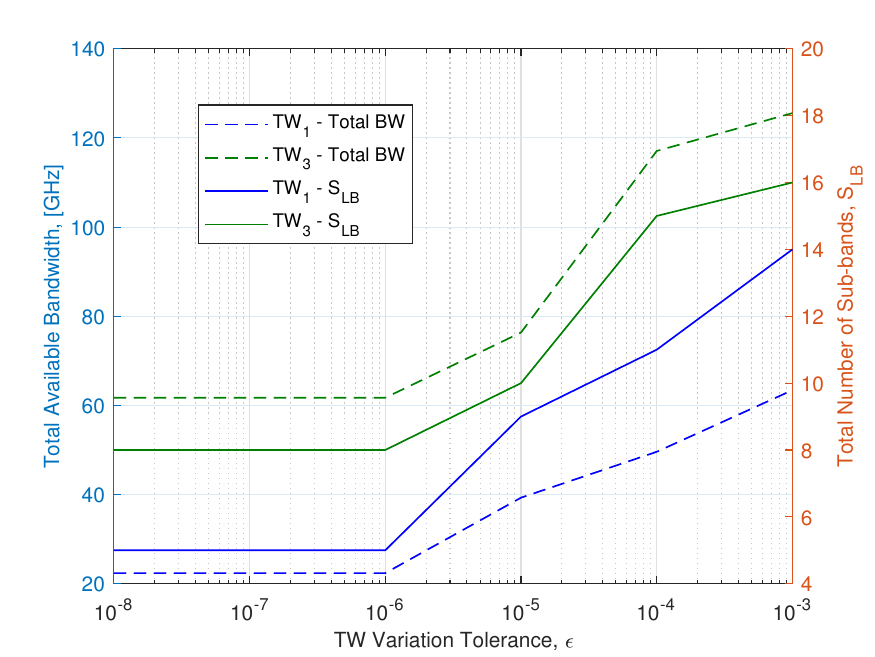}
    \label{fig:DiffTWs}}
    \vfill
{\includegraphics[scale=0.5]{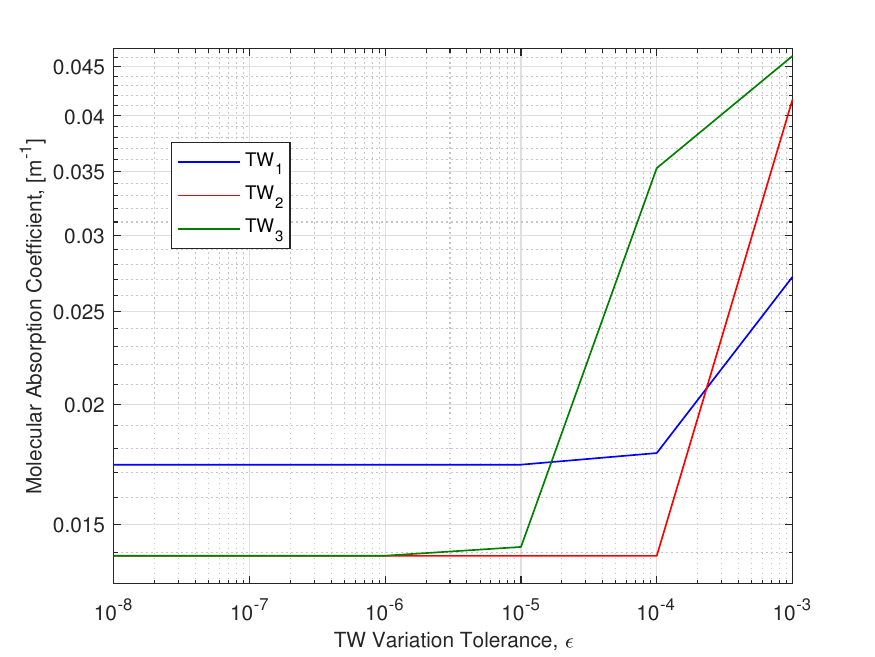}
    \label{fig:DiffTWs-MolecAbs}}
    \caption{{ (a) Total available bandwidth and optimal number of sub-bands (b) molecular absorption coefficient $k(f)$, for different TWs versus molecular absorption coefficient variation tolerance, $\epsilon$.}}
    \label{fig:DifTWs-BW-SubBands-MolecAbs}
    \vspace{-5mm}
\end{figure}

\subsubsection{{Number of Sub-bands, Available Bandwidth and Molecular Absorption Coefficient}}
\begin{figure*}
    \centering
    \begin{minipage}{0.32\textwidth}
        \centering
        \includegraphics[scale=0.43]{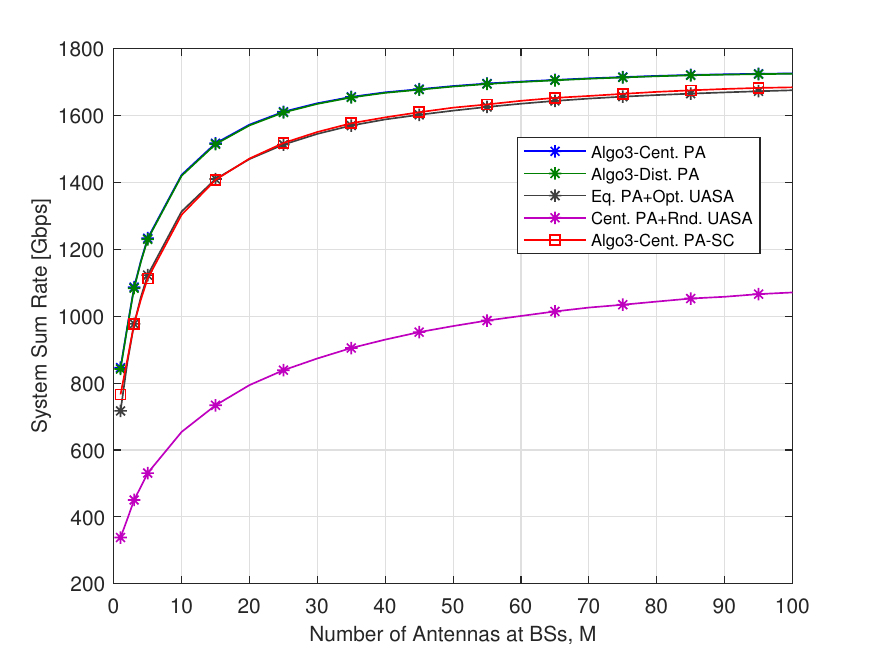}
        \caption{{System sum-rate vs the number of antennas at BSs, $q=0.5$, $B=10$, $N=12$}}
        \label{fig:NumAntsVaries}
    \end{minipage}\hfill
    \begin{minipage}{0.32\textwidth}
        \centering
        \includegraphics[scale=0.43]{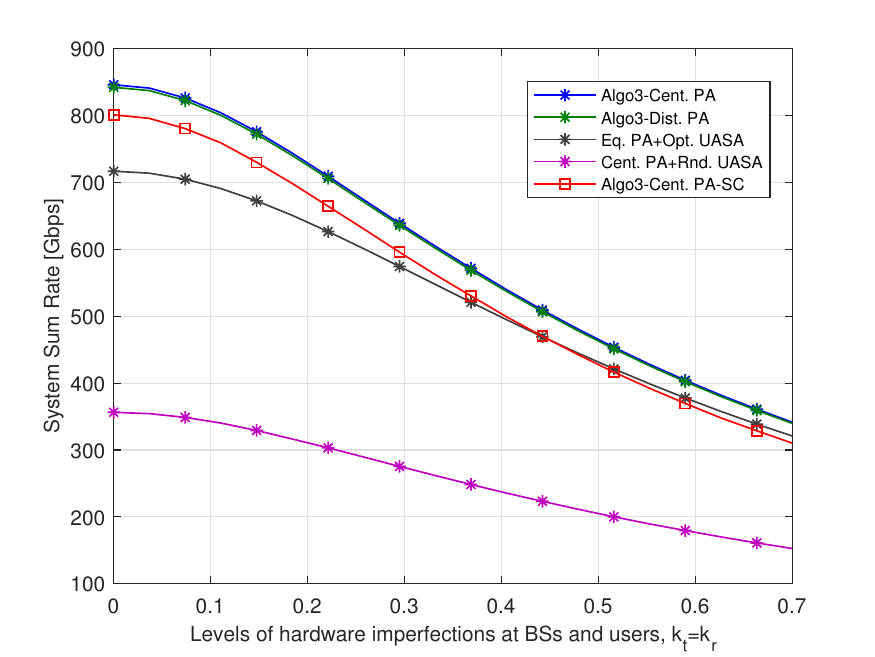}
        \caption{{System sum-rate vs the level of HI at BSs and users ($k_t=k_r$), $q=0.5$, $B=10$, $N=12$}}
        \label{fig:HWI_Varies}
    \end{minipage}\hfill
    \begin{minipage}{0.32\textwidth}
        \centering
        \includegraphics[scale=0.43]{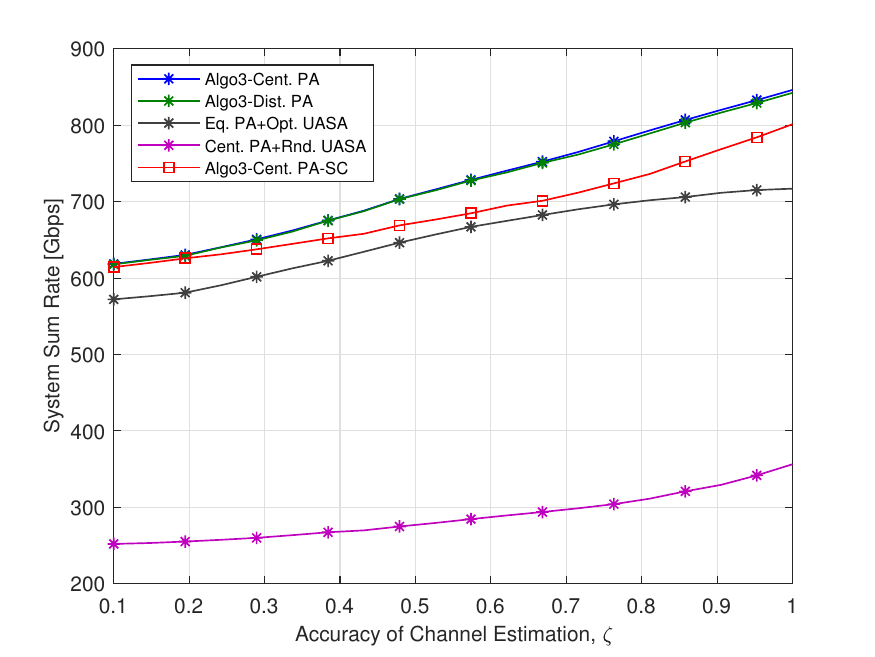}
        \caption{{System sum-rate vs the accuracy of channel estimation ($\zeta$), $q=0.5$, $B=10$, $N=12$}}
        \label{fig:ChanEstAccu_Varies}
    \end{minipage}
\end{figure*}

{Fig.~\ref{fig:DifTWs-BW-SubBands-MolecAbs}a demonstrates the impact of varying the tolerance of variation within each TW, i.e., $\epsilon$, on the total available bandwidth, which is obtained as $\bar{w}_T = f_e-w_e -(f_I+w_I)$, and the optimal number of sub-bands, achieved in \eqref{eqn:LowerBound-S}. For $\text{TW}_1$, the total available bandwidth is less than that of $\text{TW}_3$ since the length of the TW is smaller. Moreover, the smaller TW length of $\text{TW}_1$ leads to having higher edge bands compared to $\text{TW}_3$ for a given $\epsilon$, leading to a reduction in the available bandwidth. Although increasing the tolerance $\epsilon$ leads to higher available bandwidth, it reduces the edge bands and therefore causes a higher molecular absorption coefficient as shown in Fig.~\ref{fig:DifTWs-BW-SubBands-MolecAbs}b.}

\subsubsection{{Multi-Antenna Base Stations}}

{In this sub-section, we aim to show the performance of the proposed resource allocation methods under the scenario where each BS is equipped with a uniform linear array comprising $M$ antenna elements. For beamforming, maximum ratio transmission (MRT) beamforming is utilized. Denoting the beamformer at $b$-th BSs on sub-band $s$ by $\boldsymbol{u}_{b,s}=\boldsymbol{h}_{b,s,n}\in\mathbb{C}^{M\times 1}$ \cite{MRT-1}, the SINR at user $n$ from BS $b$ on sub-band $s$ is given by:}
{
\begin{equation}\label{eqn:SINR-MultiAnt}
    \gamma_{b,s,n} = \frac{p_{b,s}\abss{\boldsymbol{h}_{b,s,n}^H \boldsymbol{u}_{b,s}} }{\Bpsum q p_{\bp,s}\abss{\bar{\boldsymbol{h}}_{\bp,s,n}^H \boldsymbol{u}_{\bp,s}} + p_{b,s}\abss{\tilde{\boldsymbol{h}}_{b,s,n}^H \boldsymbol{u}_{b,s}} + {N_0} w},
\end{equation}
where the notation $\boldsymbol{h}\in\mathbb{C}^{M\times 1}$ in boldface represents the equivalent complex vector of their single-antenna channel counterpart.
Fig.~\ref{fig:NumAntsVaries} illustrates the effect of varying the number of antennas at each BS. The results indicate that increasing $M$ enhances the proposed solutions, surpassing other methods. Moreover, as the number of antennas increases, the performance of ``Algo3-Cent. PA-SC" becomes comparable to that of ``Eq. PA+Opt. UASA," which benefits from multi-connectivity.
}
\subsubsection{{Impact of Hardware Imperfections}}

{ In this sub-section, we account for the influence of hardware imperfections (HI) in THz transceivers on the system's performance. Consequently, we incorporate the impact of HI into the SINR formulation. Distortions stemming from HI at both transmitters and receivers can be modeled as additive Gaussian noise \cite{HWI-2,HWI-1}. For the received signal at a user $y=h(x+n_t)+n_r$, $n_t \sim \mathcal{CN}(0,k_t^2 p)$ represents the distortion noise of the transmitter, and $n_t \sim \mathcal{CN}(0,k_r^2 \abss{h}p)$ denotes the distortion at the receiver side. Here, $k_t$ and $k_r$ are the HI parameters at the transmitter and receiver, respectively. Note that the variance of the distortion noise is proportional to the input power of the hardware. Employing this HI modeling, the HI-aware SINR at user $n$ associated with BS $b$ over sub-band $s$ is formulated as follows:
\begin{equation}\label{eqn:HI-SINR}
    \gamma_{b,s,n}^{\mathrm{HI}} 
     = \frac{p_{b,s} \h}{\bar{I}_{b,s,n}(\p) \! + \! K_{b,s,n}(\p) + p_{b,s}\htilde \!\!\!+\!\! N_0 w},
\end{equation}
where $\bar{I}_{b,s,n}(\p)$ is defined in \eqref{eqn:CumulInter}, and 
\begin{equation}\label{eqn:HI-Power}
    K_{b,s,n}(\p)=(k_t^2 + k_r^2)p_{b,s}\h + k_r^2 (\Bpsum q p_{\bp,s}\abss{{h}_{\bp,s,n}}).
\end{equation}
Based on $\gamma_{b,s,n}^{\mathrm{HI}}$, all of the formulations used in Algorithms 1,2, and 3 can be modified. 
The impact of different levels of hardware imperfections on the performance of the proposed solutions is depicted in Fig.~\ref{fig:HWI_Varies}. It is observed that the proposed algorithms outperform other benchmarks while exhibiting a decreasing overall performance as $k_r=k_t$ increases. Additionally, the performance of the single-connectivity case deteriorates more rapidly, approaching the case of ``Eq. PA+Opt. UASA" at higher levels of HI.
}
\subsubsection{{Impact of Imperfect CSI}}

{For the case of imperfect CSI, we consider the following model $
    h = \zeta \hat{h} + \sqrt{1-\zeta^2}e,
$
where $\hat{h}$ is the channel estimation of $h$, $\zeta$ is the accuracy of channel estimation, and $e$ is the estimation error that follows a complex Gaussian distribution with zero mean and variance of $d^{-2}$, such that $d$ is the distance of the considered link \cite{Imp-CSI-1}. Fig.~\ref{fig:ChanEstAccu_Varies} demonstrates the impact of imperfect CSI on the proposed algorithms, where increasing $\zeta$ leads to approaching perfect CSI.  The results show that the performance of both the proposed solutions and other benchmarks improves with higher channel estimation accuracy. However, the improvement of ``Eq. PA+Opt. UASA" is less pronounced compared to other approaches due to its lack of optimal power allocation.
}

\section{Conclusion}
This paper considered molecular absorption-aware and blockage-aware resource allocation within a multi-cell THz system, considering multi-connectivity to maximize the system's total sum-rate. To make the problem tractable, we first proposed a convex approximation of the molecular absorption coefficient and derived a closed-form lower bound for the optimal number of sub-bands with consideration to beam-squint effects. The optimization problem is then decomposed into the power allocation, and the joint user association and sub-band assignment sub-problems. Leveraging the concepts from unimodular matrices, various low-complexity centralized and distributed solutions are presented.
Numerical findings underscore the proposed solutions' superiority compared to traditional benchmarks, {as well as the advantages of multi-connectivity compared to single-connectivity.} Also, the impact of varying blockage density along with the antennas' directionality, {hardware impairment, and imperfect CSI are explored. Moreover, the proposed framework can be extended to include explicit minimum rate requirements for users by using SCA, and FP, as in \cite{QoS-1,QoS-2,QoS-3}, to enhance performance from the users' perspective.} 


\section*{Appendix}
\subsection{Proof of \textbf{Lemma 3}}
\label{FirstAppendix}

\noindent To prove $\mathbf{T}^{(3)}$ is TUM, we utilize the following Theorem:
\begin{theorem} 
A $m \times n$ matrix $A$ is TUM \textbf{iff}  $\forall I \subseteq \{1,\ldots,m\}$, there exists a partition of $I$ into sets $F$ and $L$ such that for all $j \in \{1,\ldots,n\}$,  $\left| \sum_{i \in F} A_{ij} - \sum_{i \in L} A_{ij} \right| \leq 1$ holds \cite{schrijver1998theory}.
\label{Theo_3}
\end{theorem}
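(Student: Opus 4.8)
The statement is the classical Ghouila--Houri characterization of total unimodularity, so the plan is to prove both directions of the ``iff''. Throughout I would exploit that total unimodularity is invariant under transposition (\textbf{Definition~2}), which makes the row-partition property stated in Theorem~\ref{Theo_3} equivalent, via $A \mapsto A^{T}$, to the analogous column-partition property; this lets me apply whichever orientation is convenient. I would also use repeatedly that every submatrix of a TUM matrix is again TUM, and that appending an identity block, duplicating a row, or negating a row preserves total unimodularity.

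First I would prove \emph{necessity} (TUM $\Rightarrow$ partition) via the Hoffman--Kruskal integrality theorem. Fix $I \subseteq \{1,\dots,m\}$, let $B$ be the submatrix of $A$ consisting of the rows indexed by $I$ (all columns), and let $\sigma = B^{T}\mathbf{1}$ be its column-sum vector. Writing a candidate partition as the indicator $x \in \{0,1\}^{|I|}$ of $F$ (with $L = I\setminus F$), the $j$-th signed column sum equals $2(B^{T}x)_j-\sigma_j$, so the required bound $\lvert\sum_{i\in F}A_{ij}-\sum_{i\in L}A_{ij}\rvert\le 1$ is exactly $\lfloor \sigma_j/2\rfloor \le (B^{T}x)_j \le \lceil \sigma_j/2\rceil$. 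I would therefore study the polytope
\[
P = \{\, x \in \mathbb{R}^{|I|} : \mathbf{0}\preceq x \preceq \mathbf{1}, \ \lfloor \sigma/2\rfloor \preceq B^{T}x \preceq \lceil \sigma/2\rceil \,\}.
\]
Its constraint matrix stacks $\pm B^{T}$ with $\pm\mathbf{I}$; since $B^{T}$ is a submatrix of $A^{T}$ it is TUM, the identity block preserves TUM, and the right-hand side is integer. The point $x=\tfrac12\mathbf{1}$ gives $B^{T}x=\tfrac12\sigma$, which lies in the box, so $P\neq\emptyset$; by Hoffman--Kruskal every vertex of $P$ is integral, and any integral vertex $x^{*}$ yields the desired partition.

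For \emph{sufficiency} (partition $\Rightarrow$ TUM) I would induct on the order of square submatrices, showing each nonsingular one has determinant $\pm1$. The base case (order one) is the property with $|I|=1$, forcing all entries into $\{0,\pm1\}$. For the step I would take a minimal nonsingular submatrix $B$ of order $t$ with $\delta := \lvert\det B\rvert\ge 2$; by minimality $B^{-1}=\tfrac{1}{\det B}\,\mathrm{adj}(B)$ has all entries in $\{0,\pm1/\det B\}$. Setting $y = B^{-1}e_1$ gives $By=e_1$ and $u := (\det B)\,y \in \{0,\pm1\}^{t}\setminus\{0\}$ with $Bu=(\det B)\,e_1$. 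Applying the column partition property to $I=\mathrm{supp}(u)$ yields a signing $w\in\{-1,0,1\}^{t}$, $\mathrm{supp}(w)=I$, with $Bw\in\{-1,0,1\}^{t}$; comparing $w$ with $u$ through the half-sums $\tfrac12(u\pm w)$, which are again $\{-1,0,1\}$-vectors supported on complementary subsets of $I$, should contradict $Bu=\delta e_1$ (handling the degenerate case $w=\pm u$ separately, where $Bw=\pm\delta e_1\in\{-1,0,1\}^{t}$ already forces $\delta\le1$). \textbf{This sign/descent step is the main obstacle}: the partition property only guarantees \emph{some} admissible signing rather than one aligned with $u$, so care is needed either to extract a direct contradiction with $\delta\ge2$ or to drive a strict descent in the support size.

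Finally, since the paper only \emph{uses} Theorem~\ref{Theo_3} to certify that $\mathbf{T}^{(3)}$ is TUM, it is legitimate to invoke it as the cited classical result of \cite{schrijver1998theory}; the sketch above records the self-contained route should a direct argument be preferred.
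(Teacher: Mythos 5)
The paper itself offers no proof of Theorem~\ref{Theo_3}: it is invoked as the classical Ghouila--Houri characterization with a bare citation to \cite{schrijver1998theory}, so your closing remark that citing suffices is exactly what the paper does, and your sketch goes strictly beyond it. Your necessity direction is complete and correct: the translation of the signed column sums into the box constraints $\lfloor\sigma_j/2\rfloor\le (B^{T}x)_j\le\lceil\sigma_j/2\rceil$, the total unimodularity of the stacked constraint matrix, the feasible point $x=\tfrac12\mathbf{1}$, and Hoffman--Kruskal together deliver an integral vertex and hence the partition.

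The one genuine gap is the step you flag yourself in the sufficiency direction. The half-sums $\tfrac12(u\pm w)$ do not by themselves produce a contradiction with $\delta\ge 2$, and no descent on the support size is needed; the standard way to close the argument is a parity observation. With $Bu=\delta e_{1}$ and $u\in\{0,\pm1\}^{t}$ supported on $I$, each row $i$ satisfies $\sum_{j\in I}B_{ij}u_{j}=(Bu)_{i}$ with every nonzero term equal to $\pm1$, so $\sum_{j\in I}B_{ij}\equiv(Bu)_{i}\pmod 2$. Any admissible signing $w$ from the column-partition property then has $(Bw)_{i}\equiv\sum_{j\in I}B_{ij}\pmod 2$ and $\lvert(Bw)_{i}\rvert\le1$, hence $(Bw)_{i}=0$ for every $i\ge2$, while $(Bw)_{1}\equiv\delta\pmod 2$. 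If $(Bw)_{1}=0$ then $Bw=0$ with $w\neq0$, contradicting the nonsingularity of $B$; otherwise $Bw=\pm e_{1}$, so $w=\pm B^{-1}e_{1}=\pm u/\delta$, and comparing magnitudes on $I$ (namely $\lvert w_{j}\rvert=1$ against $\lvert u_{j}\rvert/\delta=1/\delta$) forces $\delta=1$. Either way $\delta\ge2$ is impossible, which completes your induction. Two smaller points: the base case only yields $\lvert A_{ij}\rvert\le1$, so you should state explicitly that $A$ is an integer matrix (as Ghouila--Houri assumes) to conclude $A_{ij}\in\{0,\pm1\}$; and since the paper applies the theorem verbatim to certify that $\mathbf{T}^{(3)}$ is TUM, retaining the citation in place of the full proof is entirely consistent with the paper's own treatment.
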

To apply \textbf{Theorem~\ref{Theo_3}}, we need to show that for any subset of rows of $\mathbf{T}^{(3)}$, we can find a partitioning of the rows into two groups ($G_A$ and $G_B$) where for each column, the sum of the elements in group $A$ minus the sum of the elements in group $B$ is either 0, 1, or -1. To do this, we first divide rows of $\mathbf{T}^{(3)}$ into four groups denoted by $\textit{\textrm{R}}_1, \textit{\textrm{R}}_2, \textit{\textrm{R}}_3$ and $\textit{\textrm{R}}_4$. The columns are also divided into B groups. In Figure \ref{Unimodular_Mat}, a detailed presentation of $\mathbf{T}^{(3)}$ is provided where we can see details of the first matrices in $\textit{\textrm{C}}_1$.


Now, let's assume that $K$ rows are selected, i.e. $I = \{r_1, r_2, \dots, r_K\}$. Considering the groups of rows, we divide $I$ into four sets $I_1, I_2, I_3,$ and $I_4$, where $I_j \subseteq \textit{\textrm{R}}_j$ and $\sum_{j = 1}^{4} |I_j| = K$. Based on this, we propose the following partitioning of the rows. All the rows in $I_2$ and $I_3$ go to group $G_B$, and all the rows in $I_4$ go to group $G_A$. Rows in $I_1$ go to either group $G_A$ or $G_B$ based on the following rule. Consider $r_n \in I_1$, we have:
\begin{equation}
    \text{Group of } r_n = 
\left\{
	\begin{array}{ll}
		G_B  & \text{if} \quad \exists r_m \in I_2, \quad n\equiv m \pmod{S} \\
		  G_A & \mbox{otherwise}
	\end{array}
\right.
\end{equation}
This condition comes from two facts about $\mathbf{T}_3$. First, in each column, we have exactly 3 non-zero elements. Second, if we index the rows in $\textit{\textrm{R}}_1$ from 1 to $NS$, and the rows in $\textit{\textrm{R}}_2$ from 1 to $S$, we can see that for each column the rows of the first two non-zero elements belong to $\textit{\textrm{R}}_1$ and $\textit{\textrm{R}}_2$, and their row index have the same remainder with respect to $S$. 

Now, we show that the \textit{proposed partitioning follows the conditions of} \textbf{Theorem~\ref{Theo_3}}.
Consider a column from $\textit{\textrm{C}}_1$, $c$. Since the maximum number of non-zero elements in each column is three, we'll face one of the following scenarios:

$\bullet$ There isn't any non-zero element in $c$, which means that $|\sum_{i \in G_A} \mathbf{T}^{(3)}_{i, c} -  \sum_{i \in G_B} \mathbf{T}^{(3)}_{i, c}| = 0$

$\bullet$ There is only one 1 or -1 present in $c$, i.e., the sum of the elements in one of the groups ($G_A$ or $G_B$) is 0 and the other one is -1 or +1, which makes their difference -1 or +1. 

$\bullet$ There are two non-zero elements in $c$. Let's denote the rows corresponding to the non-zero elements by $r_n$, $r_m$ ($n < m$). Since all elements in $\textit{\textrm{R}}_3$ are zero, we have three different scenarios. First, $r_n \in \textit{\textrm{R}}_1$ and $r_m \in \textit{\textrm{R}}_2$. In this case, they both are in group $G_B$, and since all other elements are zero, we have $|\sum_{i \in G_A} \mathbf{T}^{(3)}_{i, c} -  \sum_{i \in G_B} \mathbf{T}^{(3)}_{i, c}| = |0 - \mathbf{T}^{(3)}_{n, c} - \mathbf{T}^{(3)}_{m, c}|=|0 - (-1) - (+1)| = 0$. Second,  $r_n \in \textit{\textrm{R}}_2$ and $r_m \in \textit{\textrm{R}}_4$. Since all elements of $\textit{\textrm{R}}_2$ are in $G_B$, and all  elements of $\textit{\textrm{R}}_4$ are in $G_A$, we have:  $|\sum_{i \in G_A} \mathbf{T}^{(3)}_{i, c} -  \sum_{i \in G_B} \mathbf{T}^{(3)}_{i, c}| = |\mathbf{T}^{(3)}_{n, c} - \mathbf{T}^{(3)}_{m, c}|=|(+1) - (+1)| = 0$. Third, $r_n \in \textit{\textrm{R}}_1$ and $r_m \in \textit{\textrm{R}}_4$. In this case, since we don't have any non-zero element from $\textit{\textrm{R}}_2$, this means that no row from $\textit{\textrm{R}}_2$ is selected that has the same remainder with respect to $S$ as $n$. Thus, $r_n \in G_A$. This means  $|\sum_{i \in G_A} \mathbf{T}^{(3)}_{i, c} -  \sum_{i \in G_B} \mathbf{T}^{(3)}_{i, c}| = |\mathbf{T}^{(3)}_{n, c} + \mathbf{T}^{(3)}_{m, c} - 0|=|(-1) + (+1) - 0| = 0$. 
    
$\bullet$ There are three non-zero elements in $c$. Let's denote the rows with $r_n$, $r_m$, and $r_k$, where $n < m< k$. Based on the structure of $\mathbf{T}^{(3)}$, we know that $r_n \in \textit{\textrm{R}}_1$, $r_m \in \textit{\textrm{R}}_2$, and $r_k \in \textit{\textrm{R}}_4$. Thus, $r_n, r_m \in G_B$ and $r_k \in G_A$. Thus, we have $|\sum_{i \in G_A} \mathbf{T}^{(3)}_{i, c} -  \sum_{i \in G_B} \mathbf{T}^{(3)}_{i, c}| = |\mathbf{T}^{(3)}_{k, c} - \mathbf{T}^{(3)}_{n, c} - \mathbf{T}^{(3)}_{m, c}|=|(+1) - (-1) - (+1)| = +1$. 

Therefore, the proposed partitioning follows the requirements of \textbf{Theorem~\ref{Theo_3}} 
in $\textit{\textrm{C}}_1$. In a similar manner, one can show that it's also true for other columns $c \in \textit{\textrm{C}}_k$, where $2 \leq k \leq \textrm{B}$. Since each column has at most three non-zero elements, we have four similar scenarios as discussed above. 
Subsequently, for any column $c$ of $\mathbf{T}^{(3)}$, i.e. $c \in \textit{\textrm{C}}_k$, where $k \in \{1, 2, \dots \textrm{B}\}$, the proposed partitioning of the selected rows into $G_A$ and $G_B$ follows the requirement of Theorem \ref{Theo_3}. This means that $\mathbf{T}^{(3)}$ is TUM, implying that $\mathbf{T}$ is TUM; thus, proving Lemma \ref{Lemma_2}.

\subsection{{Proof of Convergence for \textbf{Algorithm-3} }}
\label{SecondAppendix}
{
For the first stage, based on \textbf{Lemma 1}, we showed that the lower bound on $S$ is optimal. In step 3 of the second stage at the $i$-th iteration, for fixed power allocation $\boldsymbol{\mathbf{P}}^{(i-1)}$, the globally optimal values of user association and sub-band assignment variables, $\A^{(i)}$, are obtained due to the convexity of the transformed linear optimization problem in $\mathcal{P}_3$.} 

{
Then, for fixed $\A^{(i)}$, we need to show that solving $\mathcal{P}_4$ converges to a stationary point for $\p^{(i)}$. Denoting the objective function of $\mathcal{P}_4$ at the $t$-th iteration of either Algorithm 1 or 2 by $f(\p^{(t)})$, for fixed $\A^{(i)}$, we have:
\begin{align}
    f(\p^{(t)}) &=  f_1(\p^{(t)},\boldsymbol{\gamma}^{(t)}) \label{OptProof-1} \\
    & \geq f_1(\p^{(t)},\boldsymbol{\gamma}^{(t-1)}) \label{OptProof-2} \\
    & = f_2(\p^{(t)},\boldsymbol{\gamma}^{(t-1)}, \boldsymbol{\bar{Y}}^{(t-1)}) \label{OptProof-3} \\
    & \geq f_2(\p^{(t)},\boldsymbol{\gamma}^{(t-1)}, \boldsymbol{Y}^{(t-1)}) \label{OptProof-4} \\
    & \geq f_2(\p^{(t-1)},\boldsymbol{\gamma}^{(t-1)}, \boldsymbol{Y}^{(t-1)}) \label{OptProof-5} \\
    &=  f_1(\p^{(t-1)},\boldsymbol{\gamma}^{(t-1)}) \label{OptProof-6} \\
    &=  f(\p^{(t-1)}) \label{OptProof-7}
\end{align}
where \eqref{OptProof-1} is due to the fact that by substituting optimal $\boldsymbol{\gamma}$ in $f_1$, the original objective function can be obtained. \eqref{OptProof-2} holds since updating $\gamma$ maximizes $f_1$ when other variables are fixed. \eqref{OptProof-3} follows \textbf{Remark 1} such that $\boldsymbol{\bar{Y}}$ is obtained using equation \eqref{eqn:Optimal-Y}. \eqref{OptProof-4} holds since updating $\boldsymbol{Y}$ maximizes $f_2$ while other variables are fixed. \eqref{OptProof-5} follows the fact that updating $\p$ maximizes $f_2$ when other variables are fixed. \eqref{OptProof-6} and \eqref{OptProof-7} follows the similar reasoning as \eqref{OptProof-3} and \eqref{OptProof-1}, respectively. Therefore, $f$ is monotonically non-decreasing after each iteration. Moreover, since the objective function, $f$ is bounded due to the constraints, Algorithms 1 and 2 converge. At the convergence, a local optimal point of the reformulated objective function $f_2$ is obtained. The solution is a stationary point of the objective function $f$ in $\pr_4$ \cite{FP-Part2}. Therefore, since the original objective function is non-decreasing after each sub-problem, the alternating optimization in stage 2 of Algorithm 3 converges to a sub-optimal solution.}

\bibliography{ref}
\bibliographystyle{IEEEtran}

\begin{IEEEbiography}[{\includegraphics[width=1in,height=1.25in, clip,keepaspectratio]{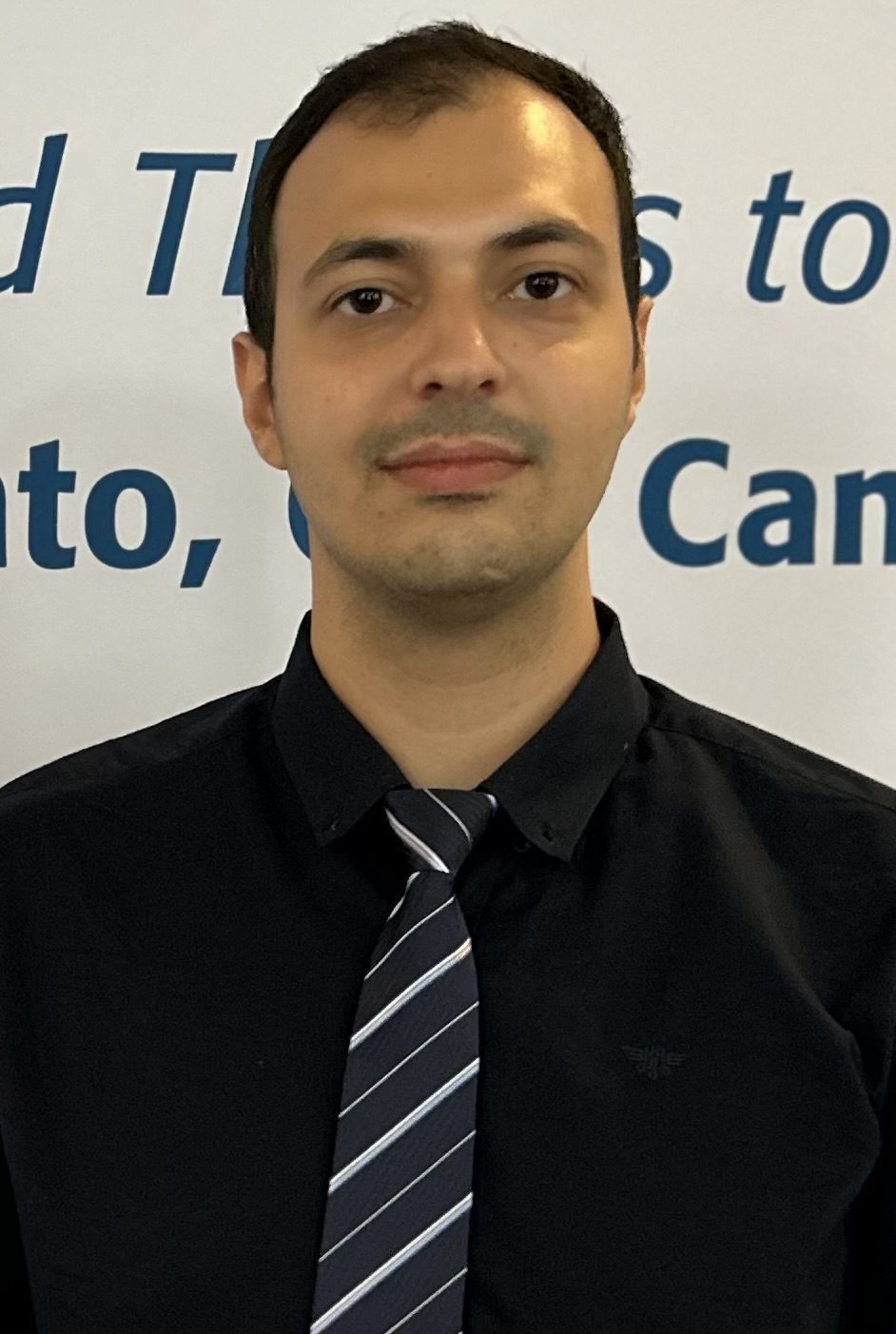}}]{\textbf{Mohammad Amin Saeidi}} received the M.Sc. degree in Electrical Engineering -- Communication Systems from the Amirkabir University of Technology, Tehran, Iran, in 2021. Currently, he is pursuing a Ph.D. degree in Electrical Engineering and Computer Science at York University, Canada. His research focuses on topics, including, resource management and optimization in wireless communications, terahertz communication, multi-band networks, and intelligent reflecting surfaces. He has served as a reviewer in various IEEE journals, including  IEEE TCOM, IEEE TWC, IEEE TMC, etc.
\end{IEEEbiography}

\begin{IEEEbiography}[{\includegraphics[width=1in,height=1.25in, clip,keepaspectratio]{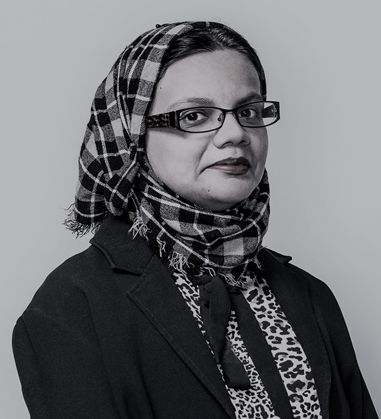}}]{\textbf{Hina Tabassum }} \hspace{0.01em} (Senior Member, IEEE) received the Ph.D. degree from the King Abdullah University of Science and Technology (KAUST). She is currently an Associate Professor with the Lassonde School of Engineering, York University, Canada, where she joined as an Assistant Professor, in 2018. She is also appointed as the York Research Chair of 5G/6G-enabled mobility and sensing applications (2023 - 2028). She was a postdoctoral research associate at University of Manitoba, Canada.  She has published over 100 refereed papers in well-reputed IEEE journals, magazines, and conferences. Her research interests include multi-band optical, mm-wave, and THz networks and cutting-edge machine learning solutions for next generation wireless communication and sensing networks. She received the Lassonde Innovation Early-Career Researcher Award in 2023 and the N2Women: Rising Stars in Computer Networking and Communications in 2022. She was listed in the Stanford’s list of the World’s Top $2\%$ Researchers in 2021, 2022, and 2023.  She is the Founding Chair of the Special Interest Group on THz communications in IEEE Communications Society (ComSoc)-Radio Communications Committee (RCC). She served as an Associate Editor for IEEE Communications Letters (2019–2023), IEEE Open Journal of the Communications Society (OJCOMS) (2019–2023), and IEEE Transactions on Green Communications and Networking (TGCN) (2020–2023). Currently, she is also serving as an Area Editor for IEEE OJCOMS and an Associate Editor for IEEE Transactions on Communications, IEEE Transactions on Wireless Communications, and IEEE Communications Surveys and Tutorials. She has been recognized as an Exemplary Editor by the IEEE Communications Letters (2020), IEEE OJCOMS (2023), and IEEE TGCN (2023).  
\end{IEEEbiography}

\begin{IEEEbiography}[{\includegraphics[width=1in,height=1.25in, clip,keepaspectratio]{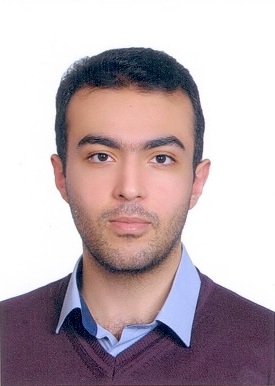}}]{\textbf{Mehrazin Alizadeh}} obtained his M.Sc. degree in Computer Engineering from York University, Toronto, ON, Canada, in 2022, after receiving a B.Sc. degree in Electrical Engineering from the University of Tehran, Tehran, Iran, in 2020. His research is fundamentally centered around the applications of deep learning in addressing resource allocation problems in wireless communications networks. In particular, his work involves leveraging deep neural networks to design novel solvers for optimization problems within wireless networks, offering significant computational advantages over traditional methodologies.
\end{IEEEbiography}

\end{document}